\def\eqref#1{equation~\ref{#1}}
\def\1{\bm{1}}
\def\eps{{\epsilon}}
\DeclareMathAlphabet{\mathsfit}{\encodingdefault}{\sfdefault}{m}{sl}
\SetMathAlphabet{\mathsfit}{bold}{\encodingdefault}{\sfdefault}{bx}{n}
\newcommand{\R}{\mathbb{R}}
\DeclareMathOperator*{\argmin}{arg\,min}
\definecolor{ocre}{rgb}{0.72,0,0}
\definecolor{MyMango}{rgb}{1.00, 0.47, 0.20}
\definecolor{brickred}{rgb}{0.8, 0.25, 0.33}
\definecolor{newblue}{rgb}{0.2,0.2,0.6} 
\definecolor{babyblueeyes}{rgb}{0.63, 0.79, 0.95}
\definecolor{newgreen}{rgb}{0.53,0.66,0.42}
\definecolor{newred}{rgb}{0.67,0.16,0}
\definecolor{forestgreen}{rgb}{0.13, 0.55, 0.13}
\definecolor{applegreen}{rgb}{0.55, 0.71, 0.0}
\definecolor{amethyst}{rgb}{0.6, 0.4, 0.8}
\newcommand{\APT}{\mathrm{APT}}
\newcommand{\vol}{\mathrm{vol}}
\newcommand{\Diag}{\ensuremath{\textbf{\textrm{Diag}}}}
\newcommand{\cD}{\mathcal D}
\newcommand{\cF}{\mathcal F}
\newcommand{\cK}{\mathcal K}
\newcommand{\cL}{\mathcal L}
\newcommand{\cN}{\mathcal N}
\newcommand{\cY}{\mathcal Y}
\newtheorem{definition}{Definition}[section]
\newtheorem{lemma}[definition]{Lemma}
\newtheorem{theorem}{Theorem}
\renewcommand{\hat}{\widehat}
\newcommand{\error}{\mathrm{err}}
\newcommand{\ball}{\mathcal{B}}
\newcommand{\suchthat}{\;\middle\vert\;}
\newcommand{\sge}{\succeq}
\newcommand{\paren}[1]{(#1)}
\newcommand{\Paren}[1]{\left(#1\right)}
\newcommand{\abs}[1]{\lvert#1\rvert}
\newcommand{\Abs}[1]{\left\lvert#1\right\rvert}
\newcommand{\set}[1]{\{#1\}}
\newcommand{\Set}[1]{\left\{#1\right\}}
\newcommand{\norm}[1]{\lVert#1\rVert}
\newcommand{\Norm}[1]{\left\lVert#1\right\rVert}
\newcommand*{\normf}[1]{\norm{#1}_{\mathrm{F}}}
\newcommand*{\Normf}[1]{\Norm{#1}_{\mathrm{F}}}
\newcommand{\iprod}[1]{\langle#1\rangle}
\newcommand{\Iprod}[1]{\left\langle#1\right\rangle}
\title{A Differentially Private Clustering Algorithm for Well-Clustered Graphs }
\author{Weiqiang He\textsuperscript{\rm 1}
\quad
Hendrik Fichtenberger\textsuperscript{\rm 2}
\quad
Pan Peng\textsuperscript{\rm 1}\thanks{Corresponding Author}
\\
\textsuperscript{\rm 1}School of Computer Science and Technology, University of Science and Technology of China
\\ \textsuperscript{\rm 2}Google Research
\\
{\tt\small hwqhwq@mail.ustc.edu.cn}
\quad
{\tt\small fichtenberger@google.com}
\quad
{\tt\small ppeng@ustc.edu.cn}
}
\newcommand{\bs}{\boldsymbol}
\newcommand{\condin}{\Phi_\text{in}}
\newcommand{\condout}{\Phi_\text{out}}
\newcommand{\phiin}{\phi_\text{in}}
\newcommand{\phiout}{\phi_\text{out}}
\renewcommand{\deg}{\mathrm{d}}
\newcommand{\COST}{\mathrm{COST}}
\begin{document}

\maketitle

\begin{abstract}
We study differentially private (DP) algorithms for recovering clusters in well-clustered graphs, which are graphs whose vertex set can be partitioned into a small number of sets, each inducing a subgraph of high inner conductance and small outer conductance. Such graphs have widespread application as a benchmark in the theoretical analysis of spectral clustering.
We provide an efficient ($\epsilon$,$\delta$)-DP algorithm tailored specifically for such graphs. Our algorithm draws inspiration from the recent work of Chen et al. [NeurIPS'23], who developed DP algorithms for recovery of stochastic block models in cases where the graph comprises exactly two nearly-balanced clusters. Our algorithm works for well-clustered graphs with $k$ nearly-balanced clusters, and the misclassification ratio almost matches the one of the best-known non-private algorithms. We conduct experimental evaluations on datasets with known ground truth clusters to substantiate the prowess of our algorithm. We also show that any (pure) $\epsilon$-DP algorithm would result in substantial error. 

\end{abstract}

\section{Introduction}
Graph Clustering is a fundamental task in unsupervised machine learning and combinatorial optimization, relevant to various domains of computer science and their diverse practical applications. The goal of Graph Clustering is to partition the vertex set of a graph into distinct groups (or clusters) so that similar vertices are grouped in the same cluster while dissimilar vertices are assigned to different clusters.

There exist numerous notions of similarity and measures of evaluating the quality of graph clusterings, with \emph{conductance} being one of the most extensively studied (see e.g. \citep{kannan2004clusterings, von2007tutorial, gharan2012approximating}).
Formally, let $G=(V,E)$ be an undirected graph. For any vertex $u\in V$, its degree is denoted by $\deg_G(u)$, and for any set $S\subseteq V$, its \emph{volume} is $\vol_G(S)=\sum_{u\in S}\deg_G(u)$.
For any two subsets $S,T\subset V$, we define $E(S,T)$ to be the set of edges between $S$ and $T$. For any nonempty subset $C\subset V$, the \emph{outer conductance} and \emph{inner conductance} are defined by
\begin{equation*}
    \condout(G,C)
    := \frac{\Abs{E(C,V\setminus C)}}{\vol_G(C)}
    , \quad
    \condin(G,C)
    := \min_{\substack{S\subseteq C, \vol_G(S)\le \frac{\vol_G(C)}{2}}} \condout(C,S)
\end{equation*}
Intuitively, if a vertex set $C$ has low outer conductance, then it has relatively few connections to the outside, and if it has high inner conductance, then it is well connected inside.
Based on this intuition, \cite{gharan2014partitioning} introduced the following notion of \emph{well-clustered graphs}.
A \emph{$k$-partition} of a graph $G=(V,E)$ is a family of $k$ disjoint vertex subsets $C_1,\dots,C_k$ of $V$ such that the union $\cup_{i=1}^kC_i=V$. 
If there is some constant $c\in (0,1]$ such that for every $i \in [k]$, $\vol_G(C_i)\ge \frac{c\vol_G(G)}{k}=\frac{2cm}{k}$ is satisfied, we call the $k$-partition $\{C_i\}_{i \in [k]}$ \emph{$c$-balanced}.

\begin{definition}[Well-clustered graph]
Given parameters $k\geq 1$, $\phiin,\phiout\in[0,1]$, a graph $G=(V,E)$ is called \emph{$(k, \phiin, \phiout)$-clusterable} if there exists a $k$-partition $\{C_i\}_{i \in [k]}$ of $V$ such that for all $i \in [k]$, $\condin(G, C_i) \ge \phiin$ and $\condout(G, C_i) \le \phiout$. Furthermore, if $\{C_i\}_{i \in [k]}$ is $c$-balanced, $G=(V,E)$ is called $c$-balanced $(k, \phiin, \phiout)$-clusterable. 
\end{definition}
If a $k$-partition $\{C_i\}_{i \in [k]}$ satisfies the above two conditions on the inner and outer conductances, then we call the partition a \emph{ground truth partition} of $G$.
\cite{gharan2014partitioning} give a simple polynomial-time spectral algorithm to approximately find such a partitioning. Since then, a plethora of works has focused on extracting the cluster structure of such graphs with spectral methods (see \Cref{sec:relatedwork}). For example, \cite{czumaj2015testing, peng2015partitioning,chiplunkar2018testing,peng2020robust,gluch2021spectral} used well-clustered graphs as a theoretical arena to gain a better understanding of why \emph{spectral clustering} is successful. They showed that variants of the widely used spectral clustering give a good approximation of the optimal clustering in a well-clustered graph. 

In this paper, we study \emph{differentially private (DP)} (see Definition \ref{definition:DP} for the formal definition) algorithms for recovering clusters in well-clustered graphs. DP algorithms aim to enable statistical analyses of sensitive information on individuals while providing strong guarantees that no information of any individual is leaked~\citep{dwork2006calibrating}. 
Given the success of spectral methods for clustering graphs in non-private settings, surprisingly little is known about differentially private spectral clustering. Finding ways to leverage these methods in a privacy-preserving way gradually bridges differential privacy and an area of research with many deep structural results and a strong toolkit for graphs. In this line of research, we obtain the following result.
\begin{theorem}
\label{theo:maintheorem}
Let $G=(V,E)$ be a $c$-balanced $(k,\phiin,\phiout)$-clusterable graph with its ground truth partition $\{C_i\}_{i\in [k]}$, where $\frac{\phiout}{\phiin^2}=O(k^{-4})$.
Then, there exists an algorithm that, for any $c,k,\phiin,\phiout$ and graph $G$ with $n$ vertices and $m\ge n\cdot\frac{\phiin^4}{\phiout^2}\cdot\frac{\log(2/\delta)}{\epsilon^2}$ edges that satisfies the preceding properties, outputs a $k$-partition $\set{\hat{C}_i}_{i=1}^k$ such that 
\[
    \vol_G (\hat{C}_i \triangle C_{\sigma(i)})
    = O\left(\frac{k^4}{c^2}\cdot \frac{\phiout}{\phiin^2} \right) \cdot \vol_G(C_{\sigma(i)}), \quad\text{for any $i\leq k$}
\]
with probability $1-\exp(-\Omega(n))$, where $\sigma$ is a permutation over $[k]:=\{1,\dots,k\}$.
Moreover, the algorithm is $(\eps,\delta)$-DP for any input graph with respect to edge privacy and runs in polynomial time.
\end{theorem}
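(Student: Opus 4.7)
My plan is to extend Chen et al.'s two-cluster DP recipe to general $k$ by privatizing the graph once and then running a standard spectral-clustering pipeline (in the style of \cite{peng2015partitioning}) as pure post-processing. Privacy will come entirely from a single application of the Gaussian mechanism on the adjacency matrix together with Laplace noise on the degree vector; every spectral and rounding step only touches the private release.

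\textbf{Algorithm.} Concretely, I would (i) draw a symmetric Gaussian matrix $N$ with above-diagonal entries of standard deviation $\sigma = \Theta(\sqrt{\log(1/\delta)}/\eps)$ and form $\tilde{\rmA} = \rmA + N$; since one edge change alters two symmetric entries of $\rmA$ by $1$, the Gaussian mechanism delivers $(\eps/2,\delta)$-DP. (ii) Release each degree as $\tilde{d}_u = \deg_G(u) + \mathrm{Lap}(O(1/\eps))$ at $\eps/2$ privacy cost. (iii) Build a noisy normalized Laplacian $\tilde{L} = I - \tilde{D}^{-1/2}\tilde{\rmA}\tilde{D}^{-1/2}$ and compute its bottom-$k$ eigenvectors $\tilde{F}\in\R^{n\times k}$. (iv) Apply a volume-weighted $k$-means-style rounding to the rows of $\tilde{F}$ to produce $\{\hat{C}_i\}_{i=1}^k$. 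Steps (iii)--(iv) are deterministic functions of the DP release, so post-processing closure yields the $(\eps,\delta)$-DP guarantee.

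\textbf{Utility analysis.} Let $F\in\R^{n\times k}$ be the bottom-$k$ eigenvector matrix of the true normalized Laplacian $L$. Higher-order Cheeger combined with $(k,\phiin,\phiout)$-clusterability gives $\lambda_k(L) = O(k\phiout)$ and $\lambda_{k+1}(L) = \Omega(\phiin^2)$, so the hypothesis $\phiout/\phiin^2 = O(k^{-4})$ guarantees a spectral gap $\Delta = \Omega(\phiin^2)$. A Davis--Kahan bound then controls $\Normf{\tilde{F}\tilde{F}^\top - FF^\top}$ by $O(\Norm{\tilde{L}-L}_{\mathrm{op}}/\Delta)$. The operator-norm perturbation, after rescaling by degrees, reduces to $\Norm{N}_{\mathrm{op}}/d_{\min}$; standard Gaussian matrix concentration gives $\Norm{N}_{\mathrm{op}} = O(\sigma\sqrt{n})$, and $c$-balancedness gives $d_{\min} = \Omega(cm/(nk))$. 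The density hypothesis $m \geq n\phiin^4/\phiout^2\cdot\log(2/\delta)/\eps^2$ is calibrated precisely so that this noise-induced subspace error is dominated by the intrinsic Peng--Sun--Zanetti error $O(k\phiout/\phiin^2)$. With $\tilde{F}$ thus close to $F$, the structure theorem for well-clustered graphs supplies approximate cluster centers $\mu_1,\dots,\mu_k\in\R^k$ such that each row $\tilde{F}_u$ lies near $\mu_i$ in $\ell_2$ for $u \in C_i$; weighted $k$-means rounding charges each misclassified vertex to the squared-distance budget, and the $O(k^4/c^2\cdot\phiout/\phiin^2)$ volume bound follows by combining this budget with a minimum inter-center separation of order $(k\vol(C_i))^{-1/2}$ and the $c^{-2}$ penalty for balancing the volumes.

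\textbf{Main obstacle.} The central difficulty is reconciling two competing scales: edge-privacy forces a global per-entry noise $\sigma = \Theta(\sqrt{\log(1/\delta)}/\eps)$ that is insensitive to the geometry of $G$, while preserving the top-$k$ eigenspace requires $\Norm{N}_{\mathrm{op}}/d_{\min}$ to be small relative to $\Delta = \Omega(\phiin^2)$. This tension is exactly what dictates the density condition $m = \Omega(n\phiin^4/\phiout^2\cdot\log(1/\delta)/\eps^2)$: a denser graph dilutes each unit of entrywise noise over more edges, so the Laplacian spectrum survives privatization. A secondary technical point is that the standard Peng--Sun--Zanetti volume-charging must be re-run with an additive subspace-error term from $N$; I expect this to be a routine but careful bookkeeping exercise that leaves the final bound within the stated $k^3/c^2$ overhead over the non-private $O(k\phiout/\phiin^2)$ per-cluster misclassification.
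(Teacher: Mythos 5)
Your proposal takes a genuinely different route from the paper --- you privatize the adjacency matrix and degree vector directly and treat spectral clustering as post-processing, whereas the paper first solves a regularized SDP, bounds the $\ell_2$-sensitivity of the degree-scaled solution $n\bs{D_G}^{1/2}\bs{X_1}\bs{D_G}^{1/2}$ via a generalized strong-convexity argument (Lemmas \ref{lemma:key-structure-lemma} and \ref{lemma:sbm-stability}), and only then applies the Gaussian mechanism. Unfortunately your route has a genuine quantitative gap, not just a bookkeeping one. The normalized Laplacian's relevant eigengap is only $\Delta=\Omega(\phiin^2)=O(1)$, while your entrywise noise of standard deviation $\Theta(\sqrt{\log(1/\delta)}/\eps)$ has operator norm $\Theta(\sqrt{n}/\eps)$; even for a near-regular graph with all degrees $\Theta(m/n)$, the perturbation of the normalized Laplacian is $\Theta\bigl(\sqrt{n}\cdot n/(m\eps)\bigr)$, and demanding that this be at most $O(\Delta\cdot\sqrt{\phiout}/\phiin)$ forces $m=\Omega\bigl(n^{3/2}\sqrt{\log(1/\delta)}/(\eps\,\phiin\sqrt{\phiout})\bigr)$. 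This is polynomially stronger than the theorem's hypothesis $m\ge n\cdot\frac{\phiin^4}{\phiout^2}\cdot\frac{\log(2/\delta)}{\eps^2}$, which is linear in $n$; under that hypothesis your noise-to-gap ratio grows like $\sqrt{n}$, so the claim that the density condition is ``calibrated precisely'' for your scheme is false. The paper escapes this because the SDP solution concentrates around the block matrix $\bs{Z}$ whose $k$-th eigenvalue is $\nu_k\ge 2cm/k$ --- an eigengap proportional to $m$ rather than $O(1)$ --- so Gaussian noise of operator norm $O(\sqrt{nm\lambda\log(1/\delta)}/\eps)$ (matching the $O(\sqrt{\lambda m})$ sensitivity) is tolerable precisely when $m$ is linear in $n$ up to the stated factors. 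In effect the SDP acts as a signal amplifier that your direct-perturbation pipeline is missing.

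Two secondary problems: $c$-balancedness lower-bounds cluster \emph{volumes}, not minimum degrees, so your step ``$c$-balancedness gives $d_{\min}=\Omega(cm/(nk))$'' is unjustified --- a well-clustered graph may contain low-degree vertices on which $\tilde{\bs{D}}^{-1/2}$ amplifies the noise arbitrarily (and the Laplace-noised degrees may even be nonpositive, making $\tilde{\bs{D}}^{-1/2}$ undefined). Also note the paper's spectral step runs $k$-means on eigenvectors of the noisy \emph{SDP} matrix rescaled by the \emph{true} degrees (degree release is never needed because the eigenvector computation is already post-processing of a private matrix), which sidesteps your step (ii) entirely. To repair your argument you would either need to assume near-regularity and a much denser graph, or adopt some denoising step (such as the paper's SDP) that converts the $O(1)$ spectral gap into one of order $m/k$ before adding privacy noise.
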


For a more general trade-off between the parameters $\eps, \delta$ and the misclassification ratio of our algorithm, we refer to  Lemma \ref{lem:utility}. We stress that the privacy guarantee of our algorithms holds for \emph{any} input graph, and in particular, it does not depend on a condition that the graph is clusterable. In the non-private setting, the best-known efficient algorithms achieve $O(\frac{k^3\cdot{\phiout}}{{\phiin^2}})$  misclassification ratio (for general well-clustered graphs) under the assumption that\footnote{Strictly speaking, \cite{peng2015partitioning} stated their result in terms of a so-called quantity $\Upsilon$, which can be lower bounded by $\phiin^2/\phiout$ by higher Cheeger inequality \citep{lee2014multiway}.} $\frac{\phiout}{\phiin^2}=O(k^{-3})$ \citep{peng2015partitioning}. Thus, for balanced well-clustered graphs, our private algorithm almost matches the best-known non-private one in terms of approximate accuracy or utility.  

Our private mechanism is inspired by the recent work of \cite{chen2023private}. We design a simple \emph{Semi-Definite Program (SDP)} and run spectral clustering on a noisy solution. To analyze our algorithm, we extend the notion of \emph{strong convexity} and prove the stability of the SDP.
This allows us to show that the solution of the SDP has small \emph{sensitivity}. In differential privacy, sensitivity is a measure of how much a function's value changes for small, but arbitrary and possibly adversarial changes in the data. For the non-private SDP solution, we show that applying classical privacy mechanisms and spectral clustering yields a differentially private clustering algorithm. Based on the analysis by \cite{peng2015partitioning}, we prove that our differentially private clustering algorithm achieves an approximate accuracy
that nearly matches the non-private version. 
Furthermore, we remark that \cite{chen2023private} give a DP algorithm for recovery of \emph{stochastic block model (SBM)} in cases where the graph comprises exactly two nearly-balanced clusters. Our algorithm achieves a similar approximation accuracy to their \emph{weak recovery} and supports $k$ clusters.

To complement our results, we conduct an experimental evaluation on datasets with known ground truth clusters to substantiate the prowess of our algorithm. Furthermore, we show that any (pure) $\epsilon$-DP algorithm entails substantial error in its output (see \Cref{sec:lowerbound}).

\begin{theorem}[informal]
    Any algorithm for the cluster recovery of well-clustered graphs with failure probability $\eta$ and misclassification rate $\zeta$ cannot satisfy $\epsilon$-DP for $\epsilon < \frac{2\ln(\frac{1}{9e\zeta})}{d}$ on $d$-regular graphs.
\end{theorem}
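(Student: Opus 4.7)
My plan is to prove the lower bound via a packing argument combined with the group-privacy property of pure $\epsilon$-DP. I would exhibit a family $\{G_1,\dots,G_N\}$ of $N$ pairwise distinct $d$-regular well-clustered graphs on a common vertex set $V$ with $|V|=n$ satisfying: (i) for every $i\neq j$, $G_i$ and $G_j$ differ in at most $d/2$ edges, and (ii) the ground-truth partitions of $G_i$ and $G_j$ disagree, after the best permutation of cluster labels, on more than $2\zeta n$ vertices. Choosing $N \approx \lceil 1/(9e\zeta)\rceil$ will recover the advertised constants.

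\textbf{Counting argument.} For each $i$, let $S_i$ denote the set of partitions of $V$ whose misclassification rate against the ground truth of $G_i$ is at most $\zeta$. Property (ii) together with the triangle inequality for partition distance implies that $S_1,\dots,S_N$ are pairwise disjoint: any partition in $S_i\cap S_j$ would witness that the ground truths of $G_i$ and $G_j$ lie within distance $2\zeta n$, contradicting (ii). Correctness of any algorithm $A$ with failure probability $\eta$ gives $\Pr[A(G_i)\in S_i]\geq 1-\eta$, while applying group privacy to (i) yields, for every $j$, $\Pr[A(G_1)\in S_j] \geq e^{-(d/2)\epsilon}\Pr[A(G_j)\in S_j]\geq e^{-(d/2)\epsilon}(1-\eta)$. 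Summing over $j$ and invoking disjointness,
\begin{equation*}
N\cdot(1-\eta)\cdot e^{-(d/2)\epsilon} \;\leq\; \sum_{j=1}^N \Pr[A(G_1)\in S_j] \;\leq\; 1,
\end{equation*}
which rearranges to $\epsilon \geq \tfrac{2}{d}\ln\bigl(N(1-\eta)\bigr)$. Plugging in $N=\lceil 1/(9e\zeta)\rceil$ and absorbing the $(1-\eta)$ factor into the constant (assuming $\eta$ is below a small absolute threshold) produces the stated inequality $\epsilon\geq \frac{2\ln(1/(9e\zeta))}{d}$.

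\textbf{Main obstacle: constructing the packing.} The technically delicate step is exhibiting the $N$ graphs with simultaneously small pairwise edge Hamming distance (at most $d/2$) and large pairwise partition distance (more than $2\zeta n$ vertices). A natural starting point would be the disjoint union of two $(d-1)$-regular expanders on $V_1,V_2$ (each of size $n/2$) joined by a perfect matching, which is $d$-regular and well-clustered with clusters $V_1,V_2$. The variants $G_i$ would then be obtained by rewiring a shared pool of ``flexible'' edges incident to carefully chosen \emph{borderline} vertices---vertices whose cluster assignment is determined by only a small number of incident edges---so that swapping a handful of edges reassigns many borderline vertices simultaneously. The hard part will be resolving the tension between well-clusteredness, which tends to make the ground-truth partition robust to small perturbations, and property (ii), which demands large ground-truth changes under small edit distance; I expect this to require borderline vertices whose neighborhoods are nearly balanced between two clusters, so that a small coordinated set of swaps flips many of them at once. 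A probabilistic construction---random rewirings drawn from a suitable combinatorial design plus a union bound certifying that all $N$ candidates remain $(k,\phiin,\phiout)$-clusterable---is my preferred route, with explicit algebraic constructions such as Cayley graphs with perturbed generators as a fallback.
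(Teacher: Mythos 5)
Your high-level strategy --- a packing of well-clustered graphs whose ground truths are pairwise far apart, combined with group privacy and a disjointness/union bound --- is the same template the paper uses (following Chen et al.). But the quantitative shape of the packing you ask for is not just ``technically delicate,'' it is infeasible, and this is where the proof breaks. You want $N \approx 1/(9e\zeta)$ graphs at pairwise edge distance at most $d/2$ whose ground-truth partitions differ on more than $2\zeta n$ vertices. In a $(2,\phiin,\phiout)$-clusterable graph, moving a set $T$ of $\Theta(\zeta n)$ vertices from one cluster to the other forces $\Omega(\phiin \cdot \vol_G(T)) - O(\phiout m) = \Omega(\zeta n d)$ edges to be rewired: in $G$ the inner-conductance condition guarantees $T$ has many edges to the rest of its cluster, while in $G'$ the outer-conductance condition caps the edges $T$ may send back to that cluster. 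So for any fixed $\zeta$ the required edge distance is $\Theta(\zeta n d)$, not $d/2$; your properties (i) and (ii) cannot hold simultaneously, and the ``borderline vertices'' you hope for are ruled out by exactly the well-clusteredness you must preserve.

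The paper resolves this by flipping both parameters: it takes \emph{exponentially} many packing points, $p \ge (2/(18e\zeta))^{2\zeta n}$ (a maximal $2\zeta$-packing of a Hamming ball of radius $O(\zeta)$ around the base membership vector, counted by a ratio of binomial coefficients), each realized by a graph $H_{\bs{x_i}}$ at edge distance $O(\zeta n d)$ from the base graph $G$ --- constructed by matching the vertices flipped in each direction and swapping the neighborhoods of matched pairs, which preserves $d$-regularity and clusterability. Group privacy is then applied over distance $\zeta n d$ from the single base graph (not pairwise, as in your version), and the ratio $\ln p / (\zeta n d) = 2\ln(1/(9e\zeta))/d$ gives the bound; the two $\Theta(\zeta n)$ factors you are missing cancel in exactly this ratio, which is why your final arithmetic coincidentally lands on the right answer. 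A second, smaller issue: your disjointness step invokes the triangle inequality, but $\error$ is only a semimetric (it satisfies a relaxed triangle inequality), which is why the paper separates the packing radius ($2\zeta$), the covering radius ($4\zeta$), and the output-ball radius ($\zeta$) rather than using a single scale.
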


This lower bound implies that, e.g., there is no $\epsilon$-DP algorithm for $\epsilon \in \Omega(1)$ that misclassifies only a constant fraction of the input. On the other hand, \Cref{theo:maintheorem} confirms the existence of such an algorithm for $(\epsilon,\delta)$-DP.

\subsection{Related work}\label{sec:relatedwork}
Our results combine spectral clustering and differential privacy. After \cite{dwork2006calibrating} systematically introduced the notion of differential privacy, various clustering objectives have been studied in this regime. Differentially private metric clustering (e.g., $k$-clustering) was studied in \cite{nissim2007smooth,wang2015differentially,huang2018optimal,shechner2020private,ghazi2020differentially}. Correlation Clustering with differential privacy has been the subject of more recent works, e.g., \cite{bun2021differentially,liu2022better,cohen2022near}. In machine learning, the SBM and mixture models are used to model and approximate properties of real graphs. Differentially private recovery of these graphs or their properties has been studied in \cite{hehir2021consistent,seif2022differentially,chen2023private,seif2023differentially}.

\emph{Spectral} methods for graphs have a long track of research that started before differential privacy was widely studied. One of the first theoretical analyses of these methods appeared in \cite{spielman1996spectral}, followed by several others, e.g., \cite{von2007tutorial,gharan2014partitioning,peng2015partitioning,kolev2015approximate,dey2019spectral,mizutani2021improved}. Indeed, there has been a long line of research aiming at designing efficient algorithms for extracting clusters with small outer conductance (and with high inner conductance) in a graph \citep{SpielmanT04,andersen2006local,gharan2012approximating,zhu2013local}. Privacy-preserving spectral methods have been studied rather recently (\cite{wang2013differential,arora2019differentially,cui2021spectral}).

For \emph{well-clustered} graphs, \emph{spectral clustering} is the most commonly used and effective algorithm, which operates in the following two steps:
(1) construct a \emph{spectral embedding}, mapping vertices into a $k$-dimensional real space. 
(2) use $k$-means or other algorithms to cluster the points in Euclidean space.
Spectral clustering has been applied in many fields and has achieved significant results \citep{alpert1995spectral, shi2000normalized, ng2001spectral, malik2001contour, belkin2001laplacian, liu2004segmentation, zelnik2004self, white2005spectral, von2007tutorial, wang2012multi, tacsdemir2012vector, cucuringu2016simple}.

\section{Preliminaries}\label{sec:preliminaries}
We use bold symbols to represent vectors and matrices. For vectors $\bs{u},\bs{v}$, we define $\iprod{\bs{u},\bs{v}}:=\bs{u}^\top\bs{v}=\sum_i \bs{u}_i\bs{v}_i$. 
We denote by $\norm{\bs{u}}_1 := \sum_i \abs{\bs{u}_i},\, \norm{\bs{u}}_2 := \sqrt{\sum_i \bs{u}_i^2}$ its $\ell_1$ norm and $\ell_2$ norm, respectively. For matrices $\bs{A}$ and $\bs{B}$, define $\iprod{\bs{A},\bs{B}}:=\sum_{i,j}\bs{A}_{ij}\bs{B}_{ij}$.
Denote by $\norm{\bs{A}}_2$ the spectral norm of $\bs{A}$.
Denote by $\normf{\bs{A}}$ the Frobenius norm of $\bs{A}$.
A matrix $\bs{A}$ is said to be \emph{positive semidefinite} if there is a matrix $\bs{V}$ such that $\bs{A} = \bs{V}^\top \bs{V}$, denoted as $\bs{A}\sge 0$. Note that $\bs{A}_{ij} = \bs{v_i}\cdot\bs{v_j}$, where $\bs{v_i}$ is the $i$-th column of $\bs{V}$ and $\bs{A}$ is known as the \emph{Gram matrix} of these vectors $\bs{v_i}, i\in[n]$. For $n\geq 1$, let $[n]=\{1,\dots,n\}$. Let $\Diag(a_1,a_2,\cdots,a_n)$ be the diagonal matrix with $a_1,a_2,\cdots,a_n$ on the diagonal. We denote $\cN\Paren{0, \sigma^2}^{m\times n}$ as the distribution over Gaussian matrices with $m\times n$ entries, each having a standard deviation $\sigma$.
{For an $n\times m$ matrix $\bs{M}$, we define the \emph{vectorization} of $\bs{M}$ as the $(n\times m)$-dimensional vector whose entries are the entries of $\bs{M}$ arranged in a sequential order}.

In this paper, we assume that $G=(V,E)$ is an undirected graph with $\abs{V}=n$ vertices, $\abs{E}=m$ edges. 
For a nonempty subset $S\subset V$, we define $G[S]$ to be the induced subgraph on $S$ and we denote by $G\{S\}$ the subgraph $G[S]$, where self-loops are added to vertices $v \in S$ such that their degrees in $G$ and $G\{S\}$ are the same.
For any two sets $X$ and $Y$, the symmetric difference of $X$ and $Y$ is defined as $X \triangle Y := (X \setminus Y) \cup (Y \setminus X)$.
For graph $G = (V,E)$, let $\bs{D_G} := \Diag(\deg_G(v_1),\deg_G(v_2),\cdots,\deg_G(v_n))$. We denote by $\bs{A_G}$ the adjacency matrix and by $\bs{L_G}$ the Laplacian matrix where $\bs{L_G} := \bs{D_G}- \bs{A_G}$.
The normalized Laplacian matrix of $G$ is defined by $\bs{\cL_G} := \bs{D_G^{-1/2}}\bs{L_G}\bs{D_G^{-1/2}}$.

\subsection{Differential Privacy}

We consider \emph{edge-privacy} and call two graphs $G=(V,E)$ and $G'=(V',E')$ \emph{neighboring} if it holds that $V=V'$ and $\lvert (E \setminus E') \cup (E' \setminus E) \rvert \le 1$.
The definition of differential privacy is as follows:

\begin{definition}[Differential privacy]
    \label{definition:DP}
    A randomized algorithm $\mathcal M$ is \emph{$(\epsilon,\delta)$-differentially private} if for all neighboring graphs $G$ and $G'$ and all subsets of outputs $S$,
        $\Pr[\mathcal{M}(G) \in S] \le e^{-\epsilon} \cdot \Pr[\mathcal{M}(G') \in S] + \delta,$
    where the probability is over the randomness of the algorithm.
\end{definition}

DP mechanisms typically achieve privacy by adding noise, where the magnitude of the noise depends on the sensitivity of the function.
\begin{definition}[Sensitivity of a function]\label{definition:sensitivity}
    Let $f:\cD\rightarrow \R^d$ with domain $\cD$ be a function. The $\ell_1$-sensitivity and $\ell_2$-sensitivity of $f$ are defined respectively as 
    \begin{equation*}
        \Delta_{f, 1}:= \max_{\substack{Y\,, Y' \in \cD\\ Y\,, Y'\text{ are neighboring}}}\Norm{f(Y)-f(Y')}_1\qquad 
        \Delta_{f, 2}:= \max_{\substack{Y\,, Y' \in \cD\\ Y\,, Y'\text{ are neighboring}}}\Norm{f(Y)-f(Y')}_2.
    \end{equation*}
\end{definition}

Gaussian mechanism is one of the most widely used mechanisms in differential privacy. 
\begin{lemma}[Gaussian mechanism]\label{lemma:gaussian-mechanism}
    Let $f:\cD\rightarrow \R^d$ with domain $\cD$ be an arbitrary $d$-dimensional function. 
    For $0< \eps\,, \delta\le 1$, the algorithm that adds noise scaled to $\cN\Paren{0, \frac{\Delta_{f, 2}^2 \cdot 2\log(2/\delta)}{\eps^2}}$ to each of the $d$ components of $f$ is $(\eps, \delta)$-DP.
\end{lemma}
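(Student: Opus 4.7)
The plan is the standard reduction from $(\eps,\delta)$-DP to a tail bound on the \emph{privacy-loss random variable}, specialized to isotropic Gaussian noise. Fix neighboring $Y,Y'\in\cD$ and set $\mu:=f(Y')-f(Y)$ with $\Delta:=\Norm{\mu}_2\le \Delta_{f,2}$; if $\Delta=0$ the output distributions coincide and the claim is trivial, so assume $\Delta>0$. Since $Z\sim\cN(0,\sigma^2 I_d)$ is rotationally invariant, only the projection of $Z$ onto $v:=\mu/\Delta$ enters the likelihood ratio between $\mathcal{M}(Y)=f(Y)+Z$ and $\mathcal{M}(Y')=f(Y')+Z$, and the $d$-dimensional problem reduces to comparing $\cN(0,\sigma^2)$ with $\cN(\Delta,\sigma^2)$ in one dimension.

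Writing $Z_1:=\iprod{v,\,x-f(Y)}$, a direct expansion gives the privacy loss
\[
L(x):=\ln\frac{p_Y(x)}{p_{Y'}(x)}=\frac{\Delta^2-2\Delta Z_1}{2\sigma^2},
\]
and under $x\sim\mathcal{M}(Y)$ we have $Z_1\sim\cN(0,\sigma^2)$. The event $\{L(x)>\eps\}$ is then exactly $\{Z_1<-t\}$ with $t=\tfrac{\sigma^2\eps}{\Delta}-\tfrac{\Delta}{2}$. Substituting the prescribed $\sigma^2=\Delta_{f,2}^2\cdot 2\log(2/\delta)/\eps^2$ and using $\Delta\le\Delta_{f,2}$ yields $t\ge\tfrac{2\Delta\log(2/\delta)}{\eps}-\tfrac{\Delta}{2}$, which is positive for $\eps,\delta\le 1$ since then $\log(2/\delta)\ge\log 2>\eps/4$.

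The quantitative heart of the proof is the tail estimate. By symmetry of $Z_1$ and the Chernoff bound $\Pr[Z_1>t]\le e^{-t^2/(2\sigma^2)}$, it suffices to show $t^2/(2\sigma^2)\ge\log(1/\delta)$. Expanding the square and again using $\Delta\le\Delta_{f,2}$ gives
\[
\frac{t^2}{2\sigma^2}=\frac{\sigma^2\eps^2}{2\Delta^2}-\frac{\eps}{2}+\frac{\Delta^2}{8\sigma^2}\;\ge\;\log(2/\delta)-\frac{\eps}{2}\;\ge\;\log(1/\delta),
\]
where the last step uses $\log 2\ge\eps/2$, valid for $\eps\le 1$. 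This is the sole place where the specific constant $\log(2/\delta)$ (rather than $\log(1/\delta)$) in the variance is needed: the extra $\log 2$ absorbs the lower-order $-\eps/2$ cross-term arising from completing the square, and checking that these constants line up is the main technical care required in the argument.

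Finally, I will convert the tail bound into the privacy guarantee via the usual partitioning argument. For any measurable output set $S$,
\[
\Pr[\mathcal{M}(Y)\in S]=\Pr[\mathcal{M}(Y)\in S,\,L\le\eps]+\Pr[\mathcal{M}(Y)\in S,\,L>\eps]\le e^{\eps}\Pr[\mathcal{M}(Y')\in S]+\delta,
\]
where the first term is bounded using the pointwise density inequality $p_Y(x)\le e^{\eps}p_{Y'}(x)$ on $\{L\le\eps\}$ and the second by the tail bound of the preceding paragraph. Since $Y,Y'$ were arbitrary neighboring inputs, this yields $(\eps,\delta)$-DP and completes the proof.
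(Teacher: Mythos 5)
Your proof is correct: the reduction to the one-dimensional privacy-loss variable, the tail bound $t^2/(2\sigma^2)\ge\log(2/\delta)-\eps/2\ge\log(1/\delta)$, and the final partitioning over $\{L\le\eps\}$ and $\{L>\eps\}$ all check out, and you correctly verify that the $\log 2$ slack in the stated variance absorbs the $-\eps/2$ cross-term for $\eps\le 1$. The paper itself gives no proof of this lemma (it is quoted as a standard fact in the style of Dwork--Roth), and your argument is exactly the standard one, so there is nothing to compare against.
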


\subsection{Useful tools}

The spectral norm of a Gaussian matrix has a high probability upper bound.
The following lemma illustrates this fact.
\begin{lemma}[Concentration of spectral norm of Gaussian matrices]\label{lem:gaussian-spectral-concentration}
Let $\bs{W}\sim\cN(0,1)^{m\times n}$. 
    Then for any $t$, we have 
    \begin{equation*}
    \Pr\Paren{\sqrt{m}-\sqrt{n}-t \le \sigma_{\min}(\bs{W}) \le \sigma_{\max}(\bs{W}) \le \sqrt{m}+\sqrt{n}+t}
    \ge 1-2\exp\Paren{-\frac{t^2}{2}} ,
    \end{equation*}
    where $\sigma_{\min}(\cdot)$ and $\sigma_{\max}(\cdot)$ denote the minimum and the maximum singular values of a matrix, respectively.
    
    Let $\bs{W}'$ be an $n$-by-$n$ symmetric matrix with independent entries sampled from $\cN(0,\sigma^2)$.
    Then by the above fact, $\Norm{\bs{W}'}_2 \le 3\sigma\sqrt{n}$ with probability at least $1-\exp(-\Omega(n))$.
\end{lemma}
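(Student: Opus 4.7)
The plan is to split the lemma into its two parts and handle each via standard Gaussian-process machinery.

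\textbf{Part 1 (rectangular case).} The stated inequality is the classical Davidson--Szarek concentration for singular values of an i.i.d.\ Gaussian matrix; I would prove it in two steps. First, I establish the expectation bounds $\E[\sigma_{\min}(\bs{W})] \ge \sqrt{m}-\sqrt{n}$ and $\E[\sigma_{\max}(\bs{W})] \le \sqrt{m}+\sqrt{n}$ via Gordon's Gaussian min-max comparison theorem (a refinement of Slepian--Fernique). Concretely, I use the representations $\sigma_{\max}(\bs{W}) = \max_{\|u\|=\|v\|=1} u^\top \bs{W} v$ and $\sigma_{\min}(\bs{W}) = \min_{\|v\|=1}\max_{\|u\|=1} u^\top \bs{W} v$, and compare the Gaussian process $X_{u,v} = u^\top \bs{W} v$ to the auxiliary process $Y_{u,v} = g^\top u + h^\top v$ with independent $g \sim \cN(0,I_m)$ and $h \sim \cN(0,I_n)$; the required covariance inequality $\E[(X_{u,v}-X_{u',v'})^2] \ge \E[(Y_{u,v}-Y_{u',v'})^2]$ reduces to an elementary calculation on the product of unit spheres. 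Second, I apply Gaussian concentration for Lipschitz functions (Borell--TIS): both $\sigma_{\max}$ and $-\sigma_{\min}$ are $1$-Lipschitz in the Frobenius norm of $\bs{W}$ (equivalently, in the $\ell_2$ norm of its vectorization), so each one deviates from its mean by at most $t$ except with probability $2\exp(-t^2/2)$. Combining the two steps and a union bound over the two tails yields the claimed two-sided inequality.

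\textbf{Part 2 (symmetric case).} I would reduce to Part 1 together with a tail bound on the Gaussian diagonal. Let $\bs{A} \sim \cN(0,\sigma^2)^{n\times n}$ be an auxiliary matrix with fully i.i.d.\ entries; applying a rescaled version of Part 1 with $m=n$ and $t=\sqrt{n}$ gives $\|\bs{A}\|_2 \le 3\sigma\sqrt{n}$ with probability at least $1-2\exp(-n/2)=1-\exp(-\Omega(n))$. To transfer this to the symmetric $\bs{W}'$, decompose $\bs{W}'=\bs{U}+\bs{U}^\top+\bs{D}$, where $\bs{U}$ is the strictly upper-triangular part of $\bs{W}'$ and $\bs{D}$ is its diagonal. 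A coupling argument bounds $\|\bs{U}\|_2$ in terms of the spectral norm of a fully i.i.d.\ Gaussian matrix of the same shape (by viewing $\bs{U}$ as a zero-padded submatrix of $\bs{A}$), and $\|\bs{D}\|_2 = \max_i|\bs{D}_{ii}| = O(\sigma\sqrt{\log n}) = o(\sigma\sqrt{n})$ follows from a standard maximum-of-Gaussians tail estimate. With $t$ of order $\sqrt{n}$ in Part 1 and the constants tracked carefully, this yields the claimed $3\sigma\sqrt{n}$ bound.

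The main obstacle is the expectation bound in Part 1 via Gordon's theorem: one has to pick the correct min-max/max-min representation for each singular value and verify the covariance comparison inequality on the product sphere (a short computation, but easy to get wrong in the min case, where it is Gordon's theorem rather than Slepian's that is needed). Once the expectation bounds are in place, the Borell--TIS concentration step is routine, and Part 2 is just algebra plus a union bound --- tracking the constant $3$ reduces to setting $t=\sqrt{n}$ in Part 1 and absorbing the diagonal contribution into a lower-order term.
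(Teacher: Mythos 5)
The paper states this lemma without proof (it is the standard Davidson--Szarek bound plus its symmetric-matrix corollary), so there is no in-paper argument to compare against; I evaluate your proposal on its own. Your Part 1 is correct and is exactly the textbook proof: Gordon's comparison for the expectations $\E[\sigma_{\max}]\le\sqrt m+\sqrt n$, $\E[\sigma_{\min}]\ge\sqrt m-\sqrt n$, then Borell--TIS for the two $1$-Lipschitz functions, with one tail of weight $\exp(-t^2/2)$ each (your phrasing ``each one deviates \ldots except with probability $2\exp(-t^2/2)$'' followed by another union bound would double-count; the correct accounting is one $\exp(-t^2/2)$ per tail, giving $2\exp(-t^2/2)$ total).

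Part 2 has a genuine quantitative gap: the decomposition $\bs{W}'=\bs{U}+\bs{U}^\top+\bs{D}$ cannot deliver the constant $3$. The triangle inequality gives $\Norm{\bs{W}'}_2\le 2\Norm{\bs{U}}_2+\Norm{\bs{D}}_2$, and even with the sharp bound $\E\Norm{\bs{U}}_2\le\E\Norm{\bs{A}}_2\le 2\sigma\sqrt n$ (which, note, requires the Sudakov--Fernique/Gaussian contraction principle --- zeroing entries of a matrix does \emph{not} in general decrease its spectral norm, so ``viewing $\bs{U}$ as a zero-padded submatrix'' is not by itself a valid monotonicity step), you land at $(4+o(1))\sigma\sqrt n$; the cruder route of plugging $t=\sqrt n$ into Part 1 before doubling gives $\approx 6\sigma\sqrt n$, and the symmetrization trick $\bs{W}'\approx(\bs{A}+\bs{A}^\top)/\sqrt2$ gives $3\sqrt2\,\sigma\sqrt n$. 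To actually obtain $3\sigma\sqrt n$ you need to treat the symmetric ensemble directly: show $\E[\lambda_{\max}(\bs{W}')]\le 2\sigma\sqrt n$ by a Slepian--Fernique comparison of the quadratic-form process $u\mapsto u^\top\bs{W}'u$ on the sphere with $u\mapsto 2\sigma\,g^\top u$ (or by the trace/moment method), note that $\lambda_{\max}$ and $-\lambda_{\min}$ are $\sqrt2\,\sigma$-Lipschitz in the independent upper-triangular entries, and apply Gaussian concentration with deviation $\sigma\sqrt n$ to each, yielding $\Norm{\bs{W}'}_2\le 3\sigma\sqrt n$ with probability $1-2\exp(-n/4)$. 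For the paper's downstream use the constant is immaterial --- Lemma \ref{lem:X2minusZ} only needs $\Norm{\bs{W}}_2=O(\sigma\sqrt n)$ --- but as a proof of the lemma as stated, your reduction does not close.
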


When a matrix undergoes a slight perturbation under some conditions, its eigenvalues and eigenvectors do not experience significant changes. Lemma \ref{lem:Weyl} and Lemma \ref{lem:davis-kahan} illustrate this.
\begin{lemma}[Weyl's inequality]\label{lem:Weyl}
Let $\bs{A}$ and $\bs{B}$ be symmetric matrices. 
  Let $\bs{R} = \bs{A} - \bs{B}$.
  Let $\alpha_1 \ge \cdots \ge \alpha_n$ be the eigenvalues of $\bs{A}$.
  Let $\beta_1 \ge \cdots \ge \beta_n$ be the eigenvalues of $\bs{B}$.
  Then for each $i\in[n]$,
  \begin{equation*}
    \Abs{\alpha_i-\beta_i} \le \Norm{\bs{R}}_2.
  \end{equation*}
\end{lemma}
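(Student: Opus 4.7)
The plan is to obtain the bound via the Courant--Fischer min-max characterization of eigenvalues of symmetric matrices. Recall that for any symmetric $\bs{M}\in\R^{n\times n}$ with eigenvalues $\mu_1\ge\cdots\ge\mu_n$,
\[
\mu_i \;=\; \max_{\substack{U\subseteq\R^n \\ \dim U = i}} \; \min_{\substack{x\in U \\ \norm{x}_2 = 1}} \; x^\top \bs{M} x.
\]
Since $\bs{A}$ and $\bs{B}$ are symmetric, so is $\bs{R}=\bs{A}-\bs{B}$, and I may apply this characterization to all three matrices.

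The first step is to control the quadratic form of $\bs{R}$. For any unit vector $x\in\R^n$, symmetry of $\bs{R}$ together with the variational definition of the spectral norm gives $|x^\top \bs{R} x| \le \norm{\bs{R}}_2$. Writing $x^\top \bs{A} x = x^\top \bs{B} x + x^\top \bs{R} x$ then yields the pointwise sandwich
\[
x^\top \bs{B} x - \norm{\bs{R}}_2 \;\le\; x^\top \bs{A} x \;\le\; x^\top \bs{B} x + \norm{\bs{R}}_2
\]
uniformly over all unit vectors $x$.

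The second step is to push this pointwise estimate through the min-max. Since adding a constant commutes with both $\min$ and $\max$, taking the inner minimum over unit vectors $x$ in a fixed $i$-dimensional subspace $U$ and then the outer maximum over such $U$ preserves the inequality on both sides. This directly produces
\[
\beta_i - \norm{\bs{R}}_2 \;\le\; \alpha_i \;\le\; \beta_i + \norm{\bs{R}}_2,
\]
which is equivalent to $|\alpha_i - \beta_i| \le \norm{\bs{R}}_2$.

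There is no real obstacle here; this is a classical textbook argument, and the only care needed is to verify that the constant $\pm\norm{\bs{R}}_2$ does indeed pass through the two nested extrema without degradation, which it does because it is independent of both the subspace $U$ and the unit vector $x$. An alternative route would be to invoke the Cauchy interlacing theorem, but the Courant--Fischer approach above is the most direct.
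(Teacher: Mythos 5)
Your proof is correct: the Courant--Fischer min-max argument with the uniform bound $|x^\top \bs{R} x| \le \Norm{\bs{R}}_2$ is the standard derivation of Weyl's inequality. The paper states this lemma as a known fact without proof, so there is nothing to compare against; your argument fills that in cleanly.
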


\begin{lemma} [Davis-Kahan $\sin(\theta)$-Theorem \citep{davis1970rotation}]
\label{lem:davis-kahan}
Let $\bs{H} = \bs{E_0}\bs{A_0}\bs{E_0}^\top +\bs{E_1}\bs{A_1}\bs{E_1}^\top$ and $\bs{\widetilde{H}} = \bs{F_0}\bs{\Lambda_0}\bs{F_0}^\top + \bs{F_1}\bs{\Lambda_1}\bs{F_1}^\top$ be symmetric real-valued matrices with $\bs{E_0}, \bs{E_1}$ and $\bs{F_0}, \bs{F_1}$ orthogonal. If the eigenvalues of $\bs{A_0}$ are contained in an interval $(a,b)$,
and the eigenvalues of $\bs{\Lambda_1}$ are excluded from the interval $(a-\eta, b+\eta)$for some $\eta > 0$, then for any unitarily invariant norm $\norm{.}$
\[\norm{\bs{F_1}^\top \bs{E_0}} \le \frac{\norm{\bs{F_1}^\top (\bs{\widetilde{H}}-\bs{H}) \bs{E_0}}}{\eta} \text{.}\]
\end{lemma}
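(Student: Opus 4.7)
The plan is to convert the statement into a Sylvester equation and then invert it using the spectral separation between $A_0$ and $\Lambda_1$. First I would exploit the block-orthogonality of the two eigendecompositions: from $E_0^\top E_0 = I$ and $E_0^\top E_1 = 0$ (analogously for the $F_i$), a direct calculation gives $H E_0 = E_0 A_0$ and $F_1^\top \tilde{H} = \Lambda_1 F_1^\top$. Setting $X := F_1^\top E_0$ and $R := \tilde{H} - H$, the identity $F_1^\top \tilde{H} E_0 - F_1^\top H E_0 = F_1^\top R E_0$ becomes the Sylvester equation
\[
  \Lambda_1 X - X A_0 \;=\; F_1^\top R E_0,
\]
so the theorem reduces to proving $\norm{X} \le \eta^{-1}\,\norm{\Lambda_1 X - X A_0}$ for every unitarily invariant norm.

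The second step is to analyse the linear map $T: Y \mapsto \Lambda_1 Y - Y A_0$. Since $A_0$ and $\Lambda_1$ are symmetric, I would diagonalise $A_0 = V\Sigma_0 V^\top$ and $\Lambda_1 = U\Sigma_1 U^\top$ and pass to $Y := U^\top X V$. Unitary invariance of $\norm{\cdot}$ preserves both $\norm{X}$ and $\norm{\Lambda_1 X - X A_0}$, and in the new basis $T$ acts entry-wise: $(\Sigma_1 Y - Y \Sigma_0)_{ij} = ((\Sigma_1)_{ii} - (\Sigma_0)_{jj})\,Y_{ij}$. The hypothesis that the eigenvalues of $\Lambda_1$ avoid $(a-\eta, b+\eta)$ while those of $A_0$ lie in $(a,b)$ guarantees $\abs{(\Sigma_1)_{ii} - (\Sigma_0)_{jj}} \ge \eta$ for every pair $(i,j)$, so the reciprocal Schur multiplier with entries $1/((\Sigma_1)_{ii} - (\Sigma_0)_{jj})$ is pointwise bounded by $1/\eta$.

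The main obstacle is upgrading this pointwise Schur bound into an operator-norm bound on an arbitrary unitarily invariant norm, not only Frobenius. In the Frobenius case the bound is immediate, since $\norm{\Sigma_1 Y - Y\Sigma_0}_F^2 = \sum_{i,j}((\Sigma_1)_{ii} - (\Sigma_0)_{jj})^2\abs{Y_{ij}}^2 \ge \eta^2\,\norm{Y}_F^2$. For the general case I would use the contour-integral representation
\[
  X \;=\; \frac{1}{2\pi i}\oint_\Gamma (zI - \Lambda_1)^{-1}\, F_1^\top R E_0\, (zI - A_0)^{-1}\,dz
\]
along a contour $\Gamma$ separating $\mathrm{spec}(A_0)$ from $\mathrm{spec}(\Lambda_1)$ at distance at least $\eta/2$ from both, and bound the resolvents using their spectral decompositions combined with Ky Fan's dominance theorem for unitarily invariant norms; equivalently, one may invoke Rosenblum's theorem on the Sylvester operator, which directly asserts that $T^{-1}$ has norm at most $1/\eta$ on every unitarily invariant norm whenever the spectra are separated by $\eta$. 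Either route delivers $\norm{X}\le \eta^{-1}\,\norm{F_1^\top R E_0}$, which is the claim.
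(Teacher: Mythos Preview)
The paper does not prove this lemma; it is cited as a classical result and used as a black box. So there is no in-paper proof to compare against, and I will simply assess your argument.

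Your reduction to the Sylvester equation $\Lambda_1 X - X A_0 = F_1^\top R E_0$ is correct and is the standard opening move, and your Frobenius-norm argument is fine. The gap is in step~3, the extension to an arbitrary unitarily invariant norm. Neither route you sketch yields the sharp constant $1/\eta$. The contour-integral representation is a valid formula for $X$, but bounding it termwise gives
\[
\|X\|\;\le\;\frac{|\Gamma|}{2\pi}\,\sup_{z\in\Gamma}\|(zI-\Lambda_1)^{-1}\|_2\,\sup_{z\in\Gamma}\|(zI-A_0)^{-1}\|_2\,\|F_1^\top R E_0\|,
\]
which carries an extra factor of order $|\Gamma|/\eta$; Ky Fan dominance converts a singular-value majorization into a norm inequality, but you have not exhibited any such majorization here. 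And what is usually called Rosenblum's theorem asserts invertibility of the Sylvester operator together with the contour formula, not the sharp $1/\eta$ norm bound---invoking it as you do is essentially circular.

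The clean fix exploits the \emph{interval} form of the separation hypothesis. Shift by $c=(a+b)/2$ so that $\mathrm{spec}(A_0)\subset(-r,r)$ with $r=(b-a)/2$, while every eigenvalue of $\Lambda_1$ has absolute value at least $r+\eta$. Then $\Lambda_1$ is invertible with $\|\Lambda_1^{-1}\|_2\le 1/(r+\eta)$ and $\|A_0\|_2\le r$, and the Neumann series $X=\sum_{k\ge 0}\Lambda_1^{-(k+1)}(F_1^\top R E_0)A_0^{\,k}$ converges. Using $\|ABC\|\le\|A\|_2\,\|B\|\,\|C\|_2$, valid for every unitarily invariant norm, gives
\[
\|X\|\;\le\;\sum_{k\ge 0}\frac{r^k}{(r+\eta)^{k+1}}\,\|F_1^\top R E_0\|\;=\;\frac{1}{\eta}\,\|F_1^\top R E_0\|.
\]
This is the missing ingredient; with it your proof is complete.
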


\subsection{$k$-means and spectral clustering}

\paragraph{$\bm{k}$-means} Given a set of $n$ points $\bs{F}_1,\dots,\bs{F}_n\in \mathbb{R}^k$, 
the objective of \emph{$k$-means} problem is to find a $k$-partition of these points,  $\mathcal{C} = \set{C_1, C_2, \ldots, C_k}$, such that the sum of squared distances between each data point and its assigned cluster center (i.e., the average of all points in the cluster) is minimized. This optimization problem can be formally expressed as  $\argmin_{\mathcal{C}}\sum_{i=1}^k\sum_{u\in C_i}\Norm{u-\frac{1}{\Abs{C_i}}\sum_{x\in C_i}x}_2^2$. It is known there exist polynomial time algorithms that approximate the optimum of the $k$-means within a constant factor (see e.g. \cite{kanungo2002local,ahmadian2019better}).

\cite{peng2015partitioning} proved the approximation ratio of spectral clustering when eigenvectors satisfy certain properties. 
They showed the following lemma, whose proof is sketched in \Cref{section:deferred-proofs-theo:spectralandkmeans}.
\begin{restatable}[\cite{peng2015partitioning}]{lemma}{spectrakmeans}
\label{lem:spectralandkmeans}
    Let $G=(V,E)$ be a graph and $k\in \mathbb{N}$. 
    Let $F: V\rightarrow \R^k$ be the embedding defined by 
    \[F(u)=\frac{1}{\sqrt{\deg_G(u)}}\cdot(\bs{f_1}(u),\cdots,\bs{f_k}(u))^\top,\] 
    where $\set{\bs{f_i}}_{i=1}^k$ is a set of orthogonal bases in $\R^n$. 
    Let $\set{S_i}_{i=1}^k$ be a $k$-partition of $G$, and $\set{\bs{\bar{g}_i}}_{i=1}^k$ be the normalized indicator vectors of the clusters $\set{S_i}_{i=1}^k$, where $\bs{\bar{g}_i}(u)=\sqrt{\frac{\deg_G(u)}{\vol_G (S_i)}}$ if $u\in S_i$, and $\bs{\bar{g}_i}(u)=0$ otherwise.  
    Suppose there is a threshold $\theta \le \frac{1}{5k}$, such that for each $i\in [k]$, there exists a linear combination of the eigenvectors $\bs{\bar{g}_1},\cdots,\bs{\bar{g}_k}$ with coefficients $\beta_j^{(i)}: \bs{\hat{g}_i}=\beta_1^{(i)}\bs{\bar{g}_1} + \cdots + \beta_k^{(i)}\bs{\bar{g}_k}$, and for each $i\in[k]$, $\norm{\bs{f_i}-\bs{\hat{g}_i}}_2\le \theta$.
    
Let \textsc{KMeans} be any algorithm for the $k$-means problem in $\mathbb{R}^k$ with approximation ratio $\APT$. 
Let $\set{A_i}_{i=1}^k$ be a $k$-partition obtained by invoking \textsc{KMeans} on the input set $\{F(u)\}_{u\in V}$. 
    Then, there exists a permutation $\sigma$ on $\set{1,\ldots, k}$ such that 
    \[ \vol_G(A_i\triangle S_{\sigma(i)}) = O(\APT\cdot k^2 \cdot \theta^2)\vol_G (S_{\sigma(i)})
    \] holds for every $i\in [k]$.
\end{restatable}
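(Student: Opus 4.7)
The plan is to adapt the standard spectral-clustering argument of \cite{peng2015partitioning}. Since each eigenvector $\bs{f_i}$ is $\theta$-close to a linear combination $\bs{\hat g_i}$ of the orthonormal indicator vectors $\{\bs{\bar g_j}\}$, the embedding $F$ essentially behaves as if every vertex in $S_\ell$ is mapped to a single \emph{ideal} point in $\R^k$; the proof then charges the misclassified volume to the small deviation from this ideal behavior.

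Concretely, for $\ell\in[k]$ define the ideal center $\bs{p^{(\ell)}} := \tfrac{1}{\sqrt{\vol_G(S_\ell)}}(\beta_\ell^{(1)},\dots,\beta_\ell^{(k)})^\top\in\R^k$. Using $\bs{\bar g_j}(u) = \sqrt{\deg_G(u)/\vol_G(S_\ell)}\cdot\mathbf{1}[j=\ell]$ for $u\in S_\ell$, a direct calculation yields $F(u)-\bs{p^{(\ell)}} = \bigl((\bs{f_i}(u)-\bs{\hat g_i}(u))/\sqrt{\deg_G(u)}\bigr)_{i=1}^k$, and hence the degree-weighted $k$-means cost of the ground-truth partition with centers $\{\bs{p^{(\ell)}}\}$ equals
\begin{equation*}
\sum_{\ell=1}^k\sum_{u\in S_\ell}\deg_G(u)\,\|F(u)-\bs{p^{(\ell)}}\|_2^2 \;=\; \sum_{i=1}^k\|\bs{f_i}-\bs{\hat g_i}\|_2^2 \;\le\; k\theta^2.
\end{equation*}
By the approximation guarantee of \textsc{KMeans}, the cost of the output partition $\{A_j\}$ with its own centroids $\{\bs{c_j}\}$ is then at most $\APT\cdot k\theta^2$.

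Next, I would exploit $\theta\le 1/(5k)$ together with the orthonormality of $\{\bs{f_i}\}$ and $\{\bs{\bar g_i}\}$ to show that the coefficient matrix $B=(\beta_\ell^{(i)})$ satisfies $\|B^\top B - I\|_{\mathrm{op}}=O(k\theta)$, which yields the separation
\begin{equation*}
\|\bs{p^{(\ell)}}-\bs{p^{(\ell')}}\|_2^2 \;\ge\; \Omega\!\left(\tfrac{1}{\vol_G(S_\ell)}+\tfrac{1}{\vol_G(S_{\ell'})}\right), \qquad \ell\ne\ell'.
\end{equation*}
A contradiction argument — any alternative configuration of centroids would already blow up the cost bound above — then produces a permutation $\sigma$ for which $\bs{c_j}$ lies well within the inter-center gap of $\bs{p^{(\sigma(j))}}$. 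Applying the triangle inequality to a misclassified vertex $u\in A_j\cap S_{\sigma(j')}$ with $j\ne j'$ gives $\|F(u)-\bs{c_j}\|_2 = \Omega\bigl(1/\sqrt{\vol_G(S_{\sigma(j')})}\bigr)$, so its weighted contribution to the cost is $\Omega\bigl(\deg_G(u)/\vol_G(S_{\sigma(j')})\bigr)$. Summing over all misclassified vertices against the cost upper bound $\APT\cdot k\theta^2$, with an extra factor of $k$ arising from pooling the contributions across the $k$ possible wrong destinations for each true cluster, yields the stated $\vol_G(A_i\triangle S_{\sigma(i)}) = O(\APT\cdot k^2\theta^2)\vol_G(S_{\sigma(i)})$.

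The hardest part is the pairing step: one must show that $\sigma$ is genuinely a bijection along which every $\bs{c_j}$ is close to $\bs{p^{(\sigma(j))}}$, ruling out degenerate scenarios in which two output centroids collapse onto the same ideal center while leaving another ideal center unmatched. Such a scenario would force an $\Omega(1)$-fraction of some cluster's volume to lie $\Omega(1/\sqrt{\vol_G(S)})$ from its assigned centroid, contradicting the $\APT\cdot k\theta^2$ cost bound once $\theta\le 1/(5k)$; this is exactly the point at which the hypothesis on $\theta$ is consumed, and the careful accounting involved is also where the final factor $k^2$ (rather than $k$) enters the misclassification bound.
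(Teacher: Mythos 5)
Your proposal follows essentially the same route as the paper, which itself only sketches this lemma by citing the five key properties from \cite{peng2015partitioning}: concentration of the embedded points around the ideal centers $\bs{p^{(i)}}$ with total cost at most $k\theta^2$, the norm and pairwise separation of those centers, and a contradiction argument showing that any partition far (under every permutation) from $\{S_i\}$ must have $k$-means cost exceeding the $\APT\cdot k\theta^2$ upper bound. The only minor discrepancy is in the intermediate separation claim — the paper's version carries a $\zeta^2 = \Theta(1/k)$ factor, i.e.\ $\|\bs{p^{(i)}}-\bs{p^{(j)}}\|_2^2 \ge \zeta^2/(10\min\{\vol(S_i),\vol(S_j)\})$, which is where its factor $k^2$ ultimately enters, whereas you place that factor in the final pooling step — but both accountings land on the stated $O(\APT\cdot k^2\theta^2)$ bound.
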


\section{Stability of generalized strongly convex optimization}\label{sec:stability}

In this section, we define generalized strongly convex function and demonstrate the stability of optimization when the objective function is generalized strongly convex.

\begin{lemma}[See e.g. \citep{chen2023private}]
    \label{lem:neighborhood-minimizer}
    Let $f:\R^m\to  \R$ be a convex function.
    Let $\cK\subseteq\R^m$ be a convex set.
    Then $y^*\in\cK$ is a minimizer of $f$ over $\cK$ if and only if there exists a subgradient $g\in\partial f(y^*)$ such that
    \begin{equation*} 
    \Iprod{y-y^*, g} \geq 0 \quad \forall y\in\cK .
    \end{equation*}
\end{lemma}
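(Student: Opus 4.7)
The plan is to prove the two directions of the biconditional separately; only one direction requires nontrivial work.

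\emph{Sufficiency (the ``if'' direction).} Assume a subgradient $g\in\partial f(y^*)$ satisfies $\Iprod{y-y^*,g}\ge 0$ for every $y\in\cK$. By the defining inequality of a subgradient, $f(y)\ge f(y^*)+\Iprod{g,y-y^*}$ for all $y\in\R^m$; restricting to $y\in\cK$ and invoking the hypothesis, the inner product term is nonnegative, so $f(y)\ge f(y^*)$ and hence $y^*$ minimizes $f$ over $\cK$. This direction is immediate from the definition of the subdifferential.

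\emph{Necessity (the ``only if'' direction).} Assume $y^*\in\cK$ minimizes $f$ over $\cK$. The plan is to reduce to unconstrained optimality by introducing the (convex) indicator $I_\cK$ of $\cK$, setting $\tilde f := f + I_\cK$, and exploiting that $y^*$ is now a global minimizer of $\tilde f$ over $\R^m$, which is equivalent to $0\in\partial\tilde f(y^*)$. Next I invoke the Moreau--Rockafellar sum rule: since $f$ is finite-valued on all of $\R^m$, the standard constraint qualification is satisfied trivially, giving $\partial\tilde f(y^*)=\partial f(y^*)+N_\cK(y^*)$, where $N_\cK(y^*):=\{h\in\R^m : \Iprod{h,y-y^*}\le 0\text{ for all }y\in\cK\}$ is the normal cone of $\cK$ at $y^*$. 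Hence there exist $g\in\partial f(y^*)$ and $h\in N_\cK(y^*)$ with $g+h=0$, and rearranging yields $\Iprod{g,y-y^*}=-\Iprod{h,y-y^*}\ge 0$ for every $y\in\cK$, which is precisely the claim.

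\emph{Main obstacle.} The only nontrivial ingredient is producing a \emph{single} subgradient $g$ that witnesses nonnegativity for all $y\in\cK$ simultaneously. A direct directional-derivative argument (using that $y^*+t(y-y^*)\in\cK$ for $t\in[0,1]$ and that the right directional derivative $f'(y^*;d)=\max_{g\in\partial f(y^*)}\Iprod{g,d}$) only gives, for each $y$, some $g_y$ depending on $y$. The sum rule is precisely what bridges this pointwise statement to the desired uniform one; alternatively, one could apply Sion's minimax theorem to the bilinear function $(g,y)\mapsto\Iprod{g,y-y^*}$ on the compact convex set $\partial f(y^*)$ and the convex set $\cK$ to swap max and inf. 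I expect the sum-rule route to be the cleanest, matching the standard textbook treatment this lemma cites.
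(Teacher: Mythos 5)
Your proof is correct. The paper does not actually prove this lemma---it is stated as a standard fact with a citation to \citet{chen2023private}---so there is no in-paper argument to compare against; your two-direction argument (subgradient inequality for sufficiency, and $0\in\partial(f+I_\cK)(y^*)=\partial f(y^*)+N_\cK(y^*)$ via the Moreau--Rockafellar sum rule for necessity, with the constraint qualification holding because $f$ is finite, hence continuous, on all of $\R^m$) is exactly the standard textbook proof, and you correctly identify that the sole nontrivial point is obtaining a single $g$ valid for all $y\in\cK$ rather than one $g_y$ per direction.
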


Since the above lemma is stated as a property of functions of vectors, in the following, we will  treat a matrix as its respective vectorization. Specifically, the vectorization of a $m \times n$ matrix $\bs{A}$ is the $mn \times 1$ column vector obtained by stacking the columns of the matrix $\bs{A}$ on top of one another. The inner product of matrices $\bs{A}$ and $\bs{B}$, as well as the $\ell_2$ norm of matrix $\bs{A}$, are defined as $\iprod{\bs{A},\bs{B}} := \sum_{i,j}\bs{A}_{ij}\bs{B}_{ij}$ and $\Norm{\bs{A}}_{2,*} := \sqrt{\sum_{i,j}\bs{A}_{ij}^2}$, which are the same as the respective inner product and $\ell_2$ norm of their vectorizations. Notice that $\Norm{\bs{A}}_{2,*} =\Norm{\bs{A}}_F$, which is the Frobenius norm of the matrix $\bs{A}$.

\begin{definition}[Generalized strongly convex function]
    \label{definition:generalized-strongly-convex}
    Let $\cK\subseteq\R^{m\times n}$ be a convex set, $f:\cK \to \R$ be a function, and $\bs{D_1},\bs{D_2}$ be diagonal matrices.
    The function $f$ is called $(\kappa,\bs{D_1},\bs{D_2})$-strongly convex if the following inequality holds for all $\bs{X},\bs{X'}\in \cK$:
    \begin{equation*}
        f(\bs{X'}) \ge f(\bs{X}) + \Iprod{\bs{X'}-\bs{X}, \nabla f(\bs{X})} + \frac{\kappa}{2}\Norm{\bs{D_1}\bs{X'}\bs{D_2}-\bs{D_1}\bs{X}\bs{D_2}}_{2,*}^2
    \end{equation*}
\end{definition}

\begin{lemma}[Pythagorean theorem from strong convexity] \label{lem:triangle-inequality-strong-convexity}
    Let $\cK\subseteq\R^{m\times n}$ be a convex set, $f:\cK \to \R$ be a function.
    Suppose $f$ is $(\kappa,\bs{D_1},\bs{D_2})$-strongly convex for some diagonal matrices $\bs{D_1}$ and $\bs{D_2}$.
    Let $\bs{X^*}\in\cK$ be a minimizer of $f$.
    Then for any $\bs{X}\in\cK$, one has
    \begin{equation*}
    \Norm{\bs{D_1}\bs{X}\bs{D_2}-\bs{D_1}\bs{X^*}\bs{D_2}}_{2,*}^2 \leq \frac{2}{\kappa}(f(\bs{X})-f(\bs{X^*})) .
    \end{equation*}
\end{lemma}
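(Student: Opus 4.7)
The plan is to invoke the definition of $(\kappa,\bs{D_1},\bs{D_2})$-strong convexity in the "right direction" and then cancel the first-order term using the optimality of $\bs{X^*}$. Concretely, I would apply Definition \ref{definition:generalized-strongly-convex} with $\bs{X^*}$ playing the role of the base point and $\bs{X}$ playing the role of the perturbed point, to obtain
\begin{equation*}
    f(\bs{X}) \;\geq\; f(\bs{X^*}) + \Iprod{\bs{X}-\bs{X^*},\,\nabla f(\bs{X^*})} + \frac{\kappa}{2}\,\Norm{\bs{D_1}\bs{X}\bs{D_2}-\bs{D_1}\bs{X^*}\bs{D_2}}_{2,*}^{2}.
\end{equation*}

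Next, since $\bs{X^*}$ is a minimizer of $f$ over the convex set $\cK$, Lemma \ref{lem:neighborhood-minimizer} (viewing the matrix space as $\R^{mn}$ via vectorization, as the paper does just before Definition \ref{definition:generalized-strongly-convex}) guarantees a (sub)gradient $g\in\partial f(\bs{X^*})$ with $\Iprod{\bs{X}-\bs{X^*},g}\geq 0$ for every $\bs{X}\in\cK$; taking $g=\nabla f(\bs{X^*})$ (which lies in $\partial f(\bs{X^*})$ when $f$ is differentiable, and otherwise the same strong-convexity inequality holds with any subgradient in place of $\nabla f$) kills the linear term. Rearranging the resulting inequality gives the claimed bound
\begin{equation*}
    \Norm{\bs{D_1}\bs{X}\bs{D_2}-\bs{D_1}\bs{X^*}\bs{D_2}}_{2,*}^{2} \;\leq\; \frac{2}{\kappa}\bigl(f(\bs{X})-f(\bs{X^*})\bigr).
\end{equation*}

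There is no substantive obstacle: the argument is the standard "Pythagorean" consequence of strong convexity, and the generalization to the seminorm $\bs{X}\mapsto\Norm{\bs{D_1}\bs{X}\bs{D_2}}_{2,*}$ does not affect the proof, since strong convexity is invoked as a black-box inequality. The only minor care is notational — treating matrices as their vectorizations so that Lemma \ref{lem:neighborhood-minimizer} applies verbatim, and being explicit about which version of the first-order optimality condition is used (gradient vs.\ subgradient), which is exactly the convention already set up in the paragraph preceding Definition \ref{definition:generalized-strongly-convex}.
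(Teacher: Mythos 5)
Your proposal matches the paper's proof: both apply the strong-convexity inequality at the base point $\bs{X^*}$ and then invoke Lemma \ref{lem:neighborhood-minimizer} to obtain a subgradient $g\in\partial f(\bs{X^*})$ with $\Iprod{\bs{X}-\bs{X^*},g}\geq 0$, which cancels the linear term and yields the bound. Your added remark about the gradient-versus-subgradient convention is a minor clarification of the same argument, not a different route.
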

\begin{proof}
    By the definition of $(\kappa,\bs{D_1},\bs{D_2})$-strongly convexity, 
    {for any subgradient $g\in \partial f(\bs{X^*})$,}
    \begin{equation*}
    f(\bs{X}) \ge f(\bs{X^*}) + \Iprod{\bs{X}-\bs{X^*}, g} + \frac{\kappa}{2}\Norm{\bs{D_1}\bs{X}\bs{D_2}-\bs{D_1}\bs{X^*}\bs{D_2}}_{2,*}^2 .
    \end{equation*}
    By Lemma \ref{lem:neighborhood-minimizer}, 
    {there exists a subgradient $g\in \partial f(\bs{X^*})$ such that $\Iprod{\bs{X}-\bs{X^*}, g} \ge 0$.}
    Then the result follows.
\end{proof}

The following is a generalization of a result in \cite{chen2023private}. It shows that if the objective function of an SDP is generalized strongly convex for some diagonal matrices $\bs{D_1}$ and $\bs{D_2}$, then the $\ell_2$-sensitivity of the scaled solution can be bounded by the $\ell_1$-sensitivity of the objective function.
\begin{lemma}[Stability of strongly-convex optimization]\label{lemma:key-structure-lemma}
    Let $\cY$ be a set of databases. Let $\cK(\cY)$ be a family of closed convex subsets of $\R^m$ parameterized by $Y\in\cY$ and let $\cF(\cY)$ be a family of functions $f_Y:\cK(Y)\rightarrow \R\,,$ parameterized by $Y\in \cY\,,$ such that:
    \begin{enumerate}
        \item for adjacent databases $Y,Y'\in \cY\,,$ and  $\bs{X}\in \cK(Y)$ there exist $\bs{X'}\in \cK(Y')\cap \cK(Y)$ satisfying  $\Abs{f_Y (\bs{X})-f_{Y'} (\bs{X'})} \le \alpha$ and $\Abs{f_{Y'}(\bs{X'})-f_Y(\bs{X'})}\le \alpha\,.$
        \item $f_Y$ is $(\kappa,\bs{D_1},\bs{D_2})$-strongly convex in $\bs{X}\in \cK(Y)$ for some diagonal matrices $\bs{D_1}$ and $\bs{D_2}$.
    \end{enumerate}
    Then for $Y, Y'\in \cY$, $\bs{\hat{X}} := \arg\min_{\bs{X}\in \cK(Y)} f_Y(\bs{X})$ and $\hat{\bs{X}}' := \arg\min_{\bs{X'}\in \cK(Y')} f_{Y'}(\bs{X'})\,,$ it holds
    \[
    \Normf{\bs{D_1}\bs{\hat{X}}\bs{D_2}-\bs{D_1}\bs{\hat{X}'}\bs{D_2}}^2\le \frac{12\alpha}{\kappa}\,.
    \]
\end{lemma}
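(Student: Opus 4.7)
The plan is to construct a common \emph{bridge point} in $\cK(Y)\cap\cK(Y')$ whose function values under both $f_Y$ and $f_{Y'}$ are close to the respective optimal values, and then convert these value-gap bounds into scaled-Frobenius distance bounds via generalized strong convexity (Lemma~\ref{lem:triangle-inequality-strong-convexity}).

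First, I would apply hypothesis~1 of the lemma to $\bs{\hat{X}}\in\cK(Y)$ to obtain an accompanying point $\bs{\tilde{X}}\in\cK(Y')\cap\cK(Y)$ satisfying $|f_Y(\bs{\hat{X}})-f_{Y'}(\bs{\tilde{X}})|\le\alpha$ and $|f_{Y'}(\bs{\tilde{X}})-f_Y(\bs{\tilde{X}})|\le\alpha$. Using the symmetry of the neighboring relation, the same hypothesis applied to $\bs{\hat{X}}'\in\cK(Y')$ yields $\bs{\tilde{X}}'\in\cK(Y)\cap\cK(Y')$ with the analogous pair of bounds $|f_{Y'}(\bs{\hat{X}}')-f_Y(\bs{\tilde{X}}')|\le\alpha$ and $|f_Y(\bs{\tilde{X}}')-f_{Y'}(\bs{\tilde{X}}')|\le\alpha$.

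Second, I would chain these two-sided inequalities with the minimizer properties $f_Y(\bs{\hat{X}})\le f_Y(\bs{X})$ for $\bs{X}\in\cK(Y)$ and $f_{Y'}(\bs{\hat{X}}')\le f_{Y'}(\bs{X})$ for $\bs{X}\in\cK(Y')$. This yields four value gaps of size at most $2\alpha$: $f_Y(\bs{\tilde{X}})-f_Y(\bs{\hat{X}})$, $f_Y(\bs{\tilde{X}}')-f_Y(\bs{\hat{X}})$, $f_{Y'}(\bs{\tilde{X}})-f_{Y'}(\bs{\hat{X}}')$ and $f_{Y'}(\bs{\tilde{X}}')-f_{Y'}(\bs{\hat{X}}')$; in each case one combines a direct upper bound like $f_{Y'}(\bs{\tilde{X}})\le f_Y(\bs{\hat{X}})+\alpha$ with a matching lower bound such as $f_{Y'}(\bs{\hat{X}}')\ge f_Y(\bs{\tilde{X}}')-\alpha\ge f_Y(\bs{\hat{X}})-\alpha$. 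Invoking Lemma~\ref{lem:triangle-inequality-strong-convexity} at each of the two minimizers with the four relevant feasible points then converts each value gap into a bound of the form $\Normf{\bs{D_1}(\cdot)\bs{D_2}}^2\le 4\alpha/\kappa$.

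Finally, the triangle inequality through $\bs{\tilde{X}}$ (or $\bs{\tilde{X}}'$) gives $\Normf{\bs{D_1}(\bs{\hat{X}}-\bs{\hat{X}}')\bs{D_2}}\le \Normf{\bs{D_1}(\bs{\hat{X}}-\bs{\tilde{X}})\bs{D_2}}+\Normf{\bs{D_1}(\bs{\tilde{X}}-\bs{\hat{X}}')\bs{D_2}}$, from which the claim follows after squaring and combining the four bounds. The main technical obstacle is constant bookkeeping: the crude estimate $(a+b)^2\le 2(a^2+b^2)$ yields only $16\alpha/\kappa$, and recovering the sharper $12\alpha/\kappa$ advertised in the statement requires either averaging the two bridging routes through $\bs{\tilde{X}}$ and $\bs{\tilde{X}}'$ with a tighter quadratic accounting, or using first-order optimality at $\bs{\hat{X}}'$ against $\bs{\tilde{X}}\in\cK(Y')$ (and symmetrically at $\bs{\hat{X}}$ against $\bs{\tilde{X}}'\in\cK(Y)$) to control the cross term $\Iprod{\bs{D_1}(\bs{\hat{X}}-\bs{\tilde{X}})\bs{D_2},\,\bs{D_1}(\bs{\tilde{X}}-\bs{\hat{X}}')\bs{D_2}}$ appearing in the squared-norm expansion.
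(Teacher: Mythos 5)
Your proposal is essentially the paper's proof: the paper itself only says that the lemma ``follows from Lemma~\ref{lem:triangle-inequality-strong-convexity} and the proof of Lemma~4.1 in \citep{chen2023private}'', and what you reconstruct --- two bridge points from hypothesis~1, value gaps of at most $2\alpha$ obtained by chaining the $\alpha$-closeness bounds with optimality of $\bs{\hat{X}}$ and $\bs{\hat{X}}'$, conversion to scaled-Frobenius distances via the Pythagorean lemma, and a triangle inequality --- is exactly that argument, carried out for the $(\kappa,\bs{D_1},\bs{D_2})$-generalized notion (which is legitimate, since $\bs{X}\mapsto\bs{D_1}\bs{X}\bs{D_2}$ is linear and $\Normf{\cdot}$ obeys the triangle inequality). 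Your individual estimates are all correct, e.g.\ $f_{Y'}(\bs{\hat{X}}')\ge f_Y(\bs{\tilde{X}}')-\alpha\ge f_Y(\bs{\hat{X}})-\alpha$ and hence $f_{Y'}(\bs{\tilde{X}})-f_{Y'}(\bs{\hat{X}}')\le 2\alpha$.

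On the one point you flag: you are right that the straightforward bookkeeping yields $\frac{2}{\kappa}\bigl(\sqrt{2\alpha}+\sqrt{2\alpha}\bigr)^2=\frac{16\alpha}{\kappa}$ rather than $\frac{12\alpha}{\kappa}$; the sharper constant is simply inherited from the cited Lemma~4.1 of \citep{chen2023private}, whose internal accounting the present paper does not reproduce. This is not a substantive gap for the paper's purposes: the constant only rescales the variance of the Gaussian noise in Lemma~\ref{lemma:sbm-stability} and Algorithm~\ref{algo:privateClustering} (e.g.\ $32(\lambda+3)m$ in place of $24(\lambda+3)m$), and every downstream bound in Lemmas~\ref{lem:X2minusZ} and~\ref{lem:utility} and Theorem~\ref{theo:maintheorem} is stated up to constants. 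If you do want to match $12\alpha/\kappa$, the route via the cross term $\Iprod{\bs{D_1}(\bs{\hat{X}}-\bs{\tilde{X}})\bs{D_2},\,\bs{D_1}(\bs{\tilde{X}}-\bs{\hat{X}'})\bs{D_2}}$ that you sketch is the right place to look, but you would need to actually carry it out; as written it is a plan, not a proof of the stated constant.
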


\begin{proof}[Proof of \Cref{lemma:key-structure-lemma}]
The proof follows from \Cref{lem:triangle-inequality-strong-convexity} and the proof of Lemma 4.1 in \citep{chen2023private}. 
\end{proof}

\section{Private clustering for well-clustered graphs}
In this section, we present our DP algorithm for a well-clustered graph and prove \Cref{theo:maintheorem}.

\subsection{The algorithm}
For a $c$-balanced $(k,\phiin,\phiout)$-clusterable graph $G = (V,E)$ with a ground truth partition $\{C_i\}_{i \in [k]}$, we set $b = \frac{1}{m^2}\sum_{\substack{i,j\in [k],i\neq j}}\vol_G(C_i)\vol_G(C_j) = 1 - \frac{1}{2m^2}\sum_{i\in[k]}\vol_G(C_i)^2$. Specifically, if all clusters have the same volume, then $b = \frac{k-1}{k}$.

We make use of the following SDP (\hyperref[equ:SDP]{1}) to extract the cluster structure of $G$. Note that the SDP assumes the knowledge of the parameter $b$, which has also been used in previous work (e.g., \cite{guedon2016community}).

\phantomsection
\begin{equation*}
\boxed{
\begin{aligned}
    &\qquad\qquad\qquad\quad\textbf{SDP (1)}\\
    \min\quad
    &\sum_{(u,v)\in E}\norm{\bs{\bar{u}}-\bs{\bar{v}}}_2^2+\frac{2\sum\limits_{u,v\in V}\iprod{\bs{\bar{u}},\bs{\bar{v}}}^2\deg_G(u)\deg_G(v)}{\lambda m}\\
    \text{s.t.}\quad
    &\sum_{u,v\in V}\Paren{\norm{\bs{\bar{u}}-\bs{\bar{v}}}_2^2\deg_G(u)\deg_G(v)}\ge 2bm^2\\
    &\iprod{\bs{\bar{u}},\bs{\bar{v}}}\ge 0,\ \text{for all }u,v\in V\\
    &\norm{\bs{\bar{u}}}_2^2=1,\ \text{for all }u\in V
\end{aligned}
}
\label{equ:SDP}
\end{equation*}

Intuitively, the SDP ensures a significant sum of vector distances for all pairs of vertices, and the objective is to minimize the vector distances between endpoints of all edges, thereby achieving a configuration where inter-class vector distances are large and intra-class vector distances are small.

Let $\bs{X}$ be $\frac{1}{n}$ times the Gram matrix of these vector $\bs{\bar{v}_1},\bs{\bar{v}_2},\cdots,\bs{\bar{v}_n}$ (i.e., $\bs{X}_{i,j}=\frac{1}{n}\cdot \bs{\bar{v}_i}\cdot \bs{\bar{v}_j}$), $\bs{L_{K_V}}$ be the Laplacian of the complete graph on set $V$. 
Let $\bs{X}\ge 0$ denote that all entries of $\bs{X}$ are non-negative. It is easy to see that 
{the SDP  (\hyperref[equ:SDP]{1})} can be equivalently written {in the form of SDP (\hyperref[equ:SDP_matrix]{2})}.

\phantomsection
\begin{equation*}
\boxed{
\begin{aligned}
    &\qquad\qquad\textbf{SDP (2)}\\ \\
    \min\quad
    &\iprod{\bs{L_G},\bs{X}}+\frac{n}{\lambda m} \normf{\bs{D_G}^{1/2}\bs{X}\bs{D_G}^{1/2}}^2\\
    \text{s.t.}\quad
    &\iprod{\bs{D_G}\bs{L_{K_V}}\bs{D_G},\bs{X}} \ge \frac{bm^2}{n}\\
    &\bs{X} \sge 0, \bs{X}\ge 0, \bs{X}_{ii}=\frac{1}{n}, \forall i
    \\ \qquad
\end{aligned}
}
\label{equ:SDP_matrix}
\end{equation*}

Define a domain $\cD$ as 
\[\cD := \Set{\bs{X}\in\R^{n\times n} \suchthat \iprod{\bs{D_G}\bs{L_{K_V}}\bs{D_G},\bs{X}}\ge \frac{bm^2}{n}\,, \bs{X}\sge 0\,, \bs{X}\ge 0\,, \bs{X}_{ii}=\frac{1}{n}\, ,\forall i}.\]
Then the optimal solution of this SDP can be expressed as 
\[\bs{\hat{X}}:= \argmin_{\bs{X}\in \cD}\iprod{\bs{L_G},\bs{X}} + \frac{n}{\lambda m}\normf{\bs{D_G}^{1/2}\bs{X}\bs{D_G}^{1/2}}^2.\]

Now we are ready to describe our algorithm whose pseudo-code is given in Algorithm \ref{algo:privateClustering}. That is, we first solve the aforementioned SDP to obtain a solution $\bs{X_1}$ and then we add Gaussian noise to a scaled version of $\bs{X_1}$, denoted by $\bs{X_2}$. We then find the first $k$ eigenvectors of $\bs{X_2}$ and obtain the corresponding spectral embedding $\{F(u)\}_{u\in V}$ and then apply the approximation algorithm \textsc{KMeans} on the embedding and output the final $k$ partition (of the vertex set $V$).
\begin{algorithm}[H]
\DontPrintSemicolon
    \caption{Private Clustering} 
    \label{algo:privateClustering}
    
    \KwInput{Graph $G=(V,E),\varepsilon,\delta$}
    \KwOutput{A $k$-partition $\{\hat{C}_i\}_{i \in [k]}$}
    Let $\bs{X_1} := \argmin_{\bs{X}\in \cD}\iprod{\bs{L_G},\bs{X}} + \frac{n}{\lambda m}\normf{\bs{D_G}^{1/2}\bs{X}\bs{D_G}^{1/2}}^2$.

    Let $\bs{X_2} := n\bs{D_G}^{1/2}\bs{X_1}\bs{D_G}^{1/2} + \bs{W}$, where $\bs{W} \sim \cN\Paren{0,24\Paren{\lambda+3}m \cdot \frac{\log(2/\delta)}{\eps^2}}^{n\times n}$.

    Let $\bs{f_1},\bs{f_2},...,\bs{f_k}$ be the $k$ eigenvectors of $\bs{X_2}$ corresponding to the first $k$ smallest eigenvalues.

    Let $F:V(G)\to \mathbb{R}^k$, where $F(u)=\deg_G(u)^{-1/2}\paren{\bs{f_1}(u),\bs{f_2}(u),...,\bs{f_k}(u)}^\top$.
    
    Apply \textsc{KMeans}$(F(u),u\in V)$ and let $\hat{C}_1,\dots,\hat{C}_k$ be the output sets.
\end{algorithm}

\subsection{Proof of Theorem \ref{theo:maintheorem}}
\paragraph{Privacy of the algorithm} 
\begin{lemma}[Stability]
\label{lemma:sbm-stability}
    Let $f(G,\bs{X})=\iprod{\bs{L_G}, \bs{X}}+\frac{n}{\lambda m}\Normf{\bs{D_G}^{1/2}\bs{X}\bs{D_G}^{1/2}}^2$, and let $g(G)=n\bs{D_G}^{1/2}\Paren{\argmin_{\bs{X}\in \cD}f(G,\bs{X})}\bs{D_G}^{1/2}$.
    The $\ell_2$-sensitivity $\Delta_{g, 2} \le \sqrt{24\Paren{\lambda+3}m}$.
\end{lemma}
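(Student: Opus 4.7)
The plan is to apply Lemma \ref{lemma:key-structure-lemma} to the family $f_G(\bs{X}) := f(G,\bs{X})$ over the $G$-indexed domain $\cD$, and then translate the resulting bound on the scaled optima into the $\ell_2$-sensitivity of $g$. First I verify strong convexity: the term $\iprod{\bs{L_G},\bs{X}}$ is linear and contributes nothing to curvature, while $\tfrac{n}{\lambda m}\Normf{\bs{D_G}^{1/2}\bs{X}\bs{D_G}^{1/2}}^2 = \tfrac{n}{\lambda m}\sum_{i,j} d_i d_j X_{ij}^2$ is a positive quadratic whose exact second-order expansion around $\bs{X}$ equals $\tfrac{n}{\lambda m}\Normf{\bs{D_G}^{1/2}(\bs{X}'-\bs{X})\bs{D_G}^{1/2}}^2$. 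Hence $f(G,\cdot)$ is $(\kappa,\bs{D_G}^{1/2},\bs{D_G}^{1/2})$-strongly convex in the sense of Definition \ref{definition:generalized-strongly-convex} with $\kappa = \tfrac{2n}{\lambda m}$.

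For neighboring $G,G'$ differing in a single edge, I construct the $\bs{X}'$ required by Lemma \ref{lemma:key-structure-lemma}. All constraints defining $\cD$ other than $\iprod{\bs{D_G}\bs{L_{K_V}}\bs{D_G},\bs{X}} \ge bm^2/n$ are $G$-independent and so hold simultaneously in both domains. To handle the remaining constraint, I use the anchor $\bs{X_0} := \tfrac{1}{n}\bs{I}$: a direct calculation gives $\iprod{\bs{D_G}\bs{L_{K_V}}\bs{D_G},\bs{X_0}} = \tfrac{n-1}{n}\sum_i d_i^2 \ge \tfrac{4(n-1)m^2}{n^2}$ by Cauchy--Schwarz, comfortably exceeding $bm^2/n$ since $b\le 1$, and the same bound holds under $G'$. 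Given $\bs{X}\in\cD(G)$, I set $\bs{X}' := (1-t)\bs{X} + t\bs{X_0}$ with $t = \Theta(1/m)$: convex combinations preserve positive semidefiniteness, entrywise nonnegativity and the diagonal $1/n$ condition, and a single-edge swap perturbs the inner-product constraint by only $O(m/n)$, which is absorbed by the $\Theta(t\cdot m^2/n)$ slack contributed by the anchor. By the triangle inequality each $f$-value discrepancy splits into a perturbation part ($\bs{X}\to\bs{X}'$), bounded by $O(t)$ times the diameter of $f$ on $\cD$ via $|\iprod{\bs{L_G},\bs{X}}| \le 4m/n$ and $\Normf{\bs{D_G}^{1/2}\bs{X}\bs{D_G}^{1/2}}^2 \le 4m^2/n^2$, and a graph-change part ($G\to G'$ at fixed argument), bounded using that $\bs{L_G}-\bs{L_{G'}}$ is the signed Laplacian of a single edge ($|\iprod{\bs{L_G}-\bs{L_{G'}},\bs{X}'}| \le 4/n$) and that the prefactor $n/(\lambda m)$ shifts by $O(1/m)$. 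Combining, $\alpha = O((\lambda+1)/(\lambda n))$ with explicit constants, and Lemma \ref{lemma:key-structure-lemma} yields $\Normf{\bs{D_G}^{1/2}(\bs{\hat{X}}-\bs{\hat{X}}')\bs{D_G}^{1/2}}^2 \le 12\alpha/\kappa = 6\alpha\lambda m/n$.

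Finally, inserting $\pm\,n\bs{D_G}^{1/2}\bs{\hat{X}}'\bs{D_G}^{1/2}$ inside $g(G)-g(G')$ and invoking the triangle inequality splits $\Normf{g(G)-g(G')}$ into $n$ times the previous bound plus a residual $n\Normf{\bs{D_G}^{1/2}\bs{\hat{X}}'\bs{D_G}^{1/2} - \bs{D_{G'}}^{1/2}\bs{\hat{X}}'\bs{D_{G'}}^{1/2}}$. The diagonal difference $\bs{\Delta} := \bs{D_{G'}}^{1/2}-\bs{D_G}^{1/2}$ is supported on at most two coordinates with entries of magnitude at most $1$, so writing $\bs{D_{G'}}^{1/2}\bs{\hat{X}}'\bs{D_{G'}}^{1/2} - \bs{D_G}^{1/2}\bs{\hat{X}}'\bs{D_G}^{1/2} = \bs{\Delta}\bs{\hat{X}}'\bs{D_G}^{1/2} + \bs{D_G}^{1/2}\bs{\hat{X}}'\bs{\Delta} + \bs{\Delta}\bs{\hat{X}}'\bs{\Delta}$ and bounding each cross-term row-by-row using $|\hat{X}'_{ij}| \le 1/n$ and $\sum_i d_i = 2m$ gives a Frobenius contribution $O(\sqrt{m}/n)$ per term, i.e.\ $O(\sqrt{m})$ after multiplying by $n$. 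Combining the two parts with tight constants yields $\Normf{g(G)-g(G')}^2 \le 24(\lambda+3)m$, which is the stated bound on $\Delta_{g,2}$. The hard part will be the simultaneous calibration of $t$ in the middle step, keeping the $f$-value gap under $\alpha = O((\lambda+1)/(\lambda n))$ while certifying $\bs{X}'\in\cD(G)\cap\cD(G')$, and then carrying the explicit constants all the way through to recover exactly $\sqrt{24(\lambda+3)m}$; a secondary but nontrivial obstacle is the residual estimate, where the asymmetry between $\bs{D_G}^{1/2}$ and $\bs{D_{G'}}^{1/2}$ requires the two-row/column Frobenius bookkeeping above rather than a looser operator-norm argument, so that the residual does not dominate the main contribution from the key-structure lemma.
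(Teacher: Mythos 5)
Your proposal follows the same route as the paper's proof: establish $(\tfrac{2n}{\lambda m},\bs{D_G}^{1/2},\bs{D_G}^{1/2})$-strong convexity of $f(G,\cdot)$ by expanding the quadratic term exactly, bound the change of $f$ across a one-edge modification, and feed both into Lemma~\ref{lemma:key-structure-lemma}. Where you differ is that you are more careful on two points the paper silently elides: (i) the feasible region $\cD$ depends on $G$ through the constraint $\iprod{\bs{D_G}\bs{L_{K_V}}\bs{D_G},\bs{X}}\ge bm^2/n$ (and through $b$ and $m$ themselves), so condition~1 of Lemma~\ref{lemma:key-structure-lemma} requires exhibiting a point of $\cK(G)\cap\cK(G')$ close in $f$-value to $\bs{X}$ --- your convex combination with the anchor $\tfrac{1}{n}\bs{I}$ and $t=\Theta(1/m)$ does this, and the slack computation via Cauchy--Schwarz is right; and (ii) $g(G')$ is scaled by $\bs{D_{G'}}^{1/2}$, not $\bs{D_G}^{1/2}$, whereas the key lemma only controls $\Normf{\bs{D_G}^{1/2}(\hat{\bs{X}}-\hat{\bs{X}}')\bs{D_G}^{1/2}}$; your residual term is exactly the missing piece, and the two-row/column bookkeeping correctly gives $O(\sqrt{m})$. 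The paper instead takes $\bs{X}'=\bs{X}$, computes $\alpha=\tfrac{4\lambda+12}{\lambda n}$ on the nose, and ignores the degree-matrix mismatch, which is what produces the exact constant $24(\lambda+3)$. The one place your outline overreaches is the closing claim that combining everything ``with tight constants'' recovers exactly $\sqrt{24(\lambda+3)m}$: your extra $O(t)$ contribution to $\alpha$ and, more importantly, the $\Theta(\sqrt{m})$ residual are additive on top of the main term and do not vanish (the residual alone is comparable to $\sqrt{24(\lambda+3)m}$ when $\lambda=O(1)$), so your argument as structured yields $\Delta_{g,2}=O(\sqrt{(\lambda+3)m})$ with a strictly larger constant rather than the literal stated bound. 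This is a constant-factor discrepancy only --- harmless for the theorem after rescaling the noise --- but you should either state the lemma with an adjusted constant or note that the exact figure $24(\lambda+3)$ is an artifact of the looser accounting.
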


\begin{proof}
    For two adjacent graphs $G, G'$, we have $\norm{\bs{L_G}-\bs{L_{G'}}}_{1,*}\le 4$, where $\norm{\bs{L_G}-\bs{L_{G'}}}_{1,*}$ is $\ell_1$ norm of the vectorizations of the matrix. And according to the range of $\bs{X}$ that $\bs{X}\sge 0, \bs{X}_{ii}=\frac{1}{n}$, we have $\max_{i,j}\bs{X}_{ij}\le \frac{1}{n}$.
    Thus, $\Abs{\iprod{\bs{L_G}, \bs{X}}-\iprod{\bs{L_{G'}}, \bs{X}}}\le \frac{4}{n}$.

    Without loss of generality, let $G'$ have one more edge $(u^*, v^*)$ than $G$. In this case, 
    \begin{equation*}
    \begin{aligned}
        &\Abs{\Normf{\bs{D_G}^{1/2}\bs{X}\bs{D_G}^{1/2}}^2 - \Normf{\bs{D_{G'}}^{1/2}\bs{X}\bs{D_{G'}}^{1/2}}^2}
        = \sum_{u,v}\bs{X}_{uv}^2\Abs{\deg_G(u)\deg_G(v)-\deg_{G'}(u)\deg_{G'}(v)}\\
        &\qquad\le \frac{1}{n^2}\sum_{u,v}\Abs{\deg_G(u)\deg_G(v)-\deg_{G'}(u)\deg_{G'}(v)}
        \le \frac{1}{n^2}\Paren{4+4\sum_{u}\deg_G(u)}\\
        &\qquad\le \frac{8m+4}{n^2} \le \frac{12m}{n^2}
    \end{aligned}
    \end{equation*}
    So $f$ has $\ell_1$-sensitivity $\frac{4}{n}+\frac{12m}{\lambda nm}$ with respect to $G$.

    Next, 
    we prove that $f(G,\bs{X})$ is $(\frac{2n}{\lambda m},\bs{D_G}^{1/2},\bs{D_G}^{1/2})$-strongly convex with respect to $\bs{X}$. The gradient $\nabla f(G,\bs{X})=\bs{L_G} + \frac{2n}{\lambda m} \bs{D_G}^{1/2}\bs{X}\bs{D_G}^{1/2}$. Let $\bs{X}, \bs{X'}\in \cK$ then
    \begin{equation*}
    \begin{aligned} 
        &f(G,\bs{X'}) = \iprod{\bs{L_G}, \bs{X'}} + \frac{n}{\lambda m}\Normf{\bs{D_G}^{1/2}\bs{X'}\bs{D_G}^{1/2}}^2\\
        &= \iprod{\bs{L_G}, \bs{X'}} + \frac{n}{\lambda m}\Normf{\bs{D_G}^{1/2}\bs{X'}\bs{D_G}^{1/2}-\bs{D_G}^{1/2}\bs{X}\bs{D_G}^{1/2}}^2\\
        &\qquad- \frac{n}{\lambda m}\Normf{\bs{D_G}^{1/2}\bs{X}\bs{D_G}^{1/2}}^2 + \frac{2n}{\lambda m}\iprod{\bs{D_G}^{1/2}\bs{X}\bs{D_G}^{1/2},\bs{D_G}^{1/2}\bs{X'}\bs{D_G}^{1/2}}\\
        &= \iprod{\bs{L_G}, \bs{X}} + \iprod{\bs{L_G}, \bs{X'}-\bs{X}} + \frac{n}{\lambda m}\Normf{\bs{D_G}^{1/2}\bs{X}\bs{D_G}^{1/2}}^2 \\
        &\qquad + \frac{2n}{\lambda m}\iprod{\bs{D_G}^{1/2}\bs{X}\bs{D_G}^{1/2},\bs{D_G}^{1/2}\bs{X'}\bs{D_G}^{1/2}-\bs{D_G}^{1/2}\bs{X}\bs{D_G}^{1/2}} + \frac{n}{\lambda m}\Normf{\bs{D_G}^{1/2}\bs{X'}\bs{D_G}^{1/2}-\bs{D_G}^{1/2}\bs{X}\bs{D_G}^{1/2}}^2\\
        &\ge f(G,\bs{X}) + \iprod{\bs{L_G},\bs{X'}-\bs{X}} + \frac{2n}{\lambda m}\iprod{\bs{D_G}^{1/2}\bs{X}\bs{D_G}^{1/2},\bs{X'}-\bs{X}} + \frac{n}{\lambda m}\Normf{\bs{D_G}^{1/2}\bs{X'}\bs{D_G}^{1/2}-\bs{D_G}^{1/2}\bs{X}\bs{D_G}^{1/2}}^2\\
        &= f(G,\bs{X}) + \iprod{\nabla f(G,\bs{X}),\bs{X'}-\bs{X}} + \frac{n}{\lambda m}\Normf{\bs{D_G}^{1/2}\bs{X'}\bs{D_G}^{1/2}-\bs{D_G}^{1/2}\bs{X}\bs{D_G}^{1/2}}^2
    \end{aligned}
    \end{equation*}
    That is $f(G, \bs{X})$ is $(\frac{2n}{\lambda m},\bs{D_G}^{1/2},\bs{D_G}^{1/2})$-strongly convex with respect to $\bs{X}$.
    
    By Lemma \ref{lemma:key-structure-lemma}, $\Normf{\frac{g(G)}{n}-\frac{g(G')}{n}}^2 \le \frac{24\Paren{\lambda+3}m}{n^2}$.
    So the $\ell_2$-sensitivity $\Delta_{g, 2} \le \sqrt{24\Paren{\lambda+3}m}$.
\end{proof}

\begin{lemma}[Privacy]
\label{lem:privacy}
    The Algorithm \ref{algo:privateClustering} is $(\eps,\delta)$-DP.
\end{lemma}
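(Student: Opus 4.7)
The plan is to apply the standard recipe of DP mechanism design: bound the sensitivity of the only $G$-dependent quantity that is released in the clear, invoke the Gaussian mechanism, and close with the post-processing property of $(\epsilon,\delta)$-differential privacy. All the substantive work has already been done in Lemma \ref{lemma:sbm-stability}; what remains is essentially bookkeeping to match parameters.

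Concretely, I would identify step~2 of Algorithm \ref{algo:privateClustering} with the Gaussian mechanism applied to the function $g(G) := n\bs{D_G}^{1/2}\bs{X_1}\bs{D_G}^{1/2}$ introduced in Lemma \ref{lemma:sbm-stability}, viewing the symmetric matrix $g(G)$ as a vector in $\R^{n^2}$ via vectorization. Lemma \ref{lemma:sbm-stability} supplies the $\ell_2$-sensitivity bound $\Delta_{g,2} \le \sqrt{24(\lambda+3)m}$, i.e.\ $\Delta_{g,2}^2 \le 24(\lambda+3)m$. The per-entry variance $24(\lambda+3)m\cdot\log(2/\delta)/\epsilon^2$ of $\bs{W}$ is exactly the scale mandated by Lemma \ref{lemma:gaussian-mechanism} for this sensitivity, so the intermediate release $\bs{X_2} = g(G) + \bs{W}$ is $(\epsilon,\delta)$-DP.

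To conclude, I would invoke the post-processing immunity of $(\epsilon,\delta)$-DP: any (possibly randomized) function of an $(\epsilon,\delta)$-DP output, combined only with $G$-independent auxiliary randomness, remains $(\epsilon,\delta)$-DP. Steps~3--5---extracting the first $k$ eigenvectors of $\bs{X_2}$, assembling the spectral embedding $F$, and running \textsc{KMeans}---all fit this description, so the final partition $\{\hat{C}_i\}_{i\in[k]}$ is $(\epsilon,\delta)$-DP.

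The one place to pause and verify is that every step after the Gaussian release depends on $G$ only through $\bs{X_2}$ (and $G$-independent randomness). The eigendecomposition and \textsc{KMeans} clearly do; the embedding $F(u) = \deg_G(u)^{-1/2}(\bs{f_1}(u),\dots,\bs{f_k}(u))^{\top}$ apparently uses the raw degree $\deg_G(u)$, but because the feasibility constraint $(\bs{X_1})_{ii} = 1/n$ forces the diagonal of $g(G)$ to equal $\deg_G(u)$, the quantity $\deg_G(u)$ is already carried (up to the Gaussian perturbation on the diagonal of $\bs{X_2}$) inside the DP-released matrix. Hence this factor can be absorbed into the post-processing without any additional privacy loss, and the $(\epsilon,\delta)$-DP guarantee transfers to the final output. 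This is the only mildly delicate step of the proof; once it is checked, the lemma follows in two lines.
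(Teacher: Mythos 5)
Your proposal follows essentially the same route as the paper: Lemma \ref{lemma:sbm-stability} gives the $\ell_2$-sensitivity bound $\sqrt{24(\lambda+3)m}$ for $g(G)=n\bs{D_G}^{1/2}\bs{X_1}\bs{D_G}^{1/2}$, the Gaussian mechanism (Lemma \ref{lemma:gaussian-mechanism}) makes the release of $\bs{X_2}$ $(\eps,\delta)$-DP, and the remaining steps are treated as post-processing. The degree subtlety you flag in step~4 is a real one that the paper's own proof passes over in silence, but your proposed resolution does not quite close it: the algorithm uses the \emph{exact} $\deg_G(u)$, which is not a function of the noisy matrix $\bs{X_2}$ (only a perturbed version of the degree sits on its diagonal), so this step is not literally post-processing of the private release; making it rigorous would require, e.g., rescaling by the noisy diagonal entries instead. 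Since the paper's argument has the same unaddressed wrinkle, your proof matches it in both structure and level of rigor.
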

\begin{proof}
Consider $\bs{X_2}$ in the algorithm as a function of $G$.
According to Lemma \ref{lemma:sbm-stability}, we can get that the $\ell_2$-sensitivity of $\bs{X_2}$ is not greater than $\sqrt{24(\lambda+3)m}$. Combining with Lemma \ref{lemma:gaussian-mechanism}, we get that the algorithm achieves $(\eps,\delta)$-DP.
\end{proof}
\paragraph{Utility of the algorithm}
\begin{lemma}[Utility]
\label{lem:utility}
Suppose that $\frac{\phiout}{\phiin^2}=O(c^2k^{-4})$, $\lambda\in [\Omega(k^4c^{-2}\phiin^{-2}),O(\frac{mc^2\epsilon^2}{nk^4\log(2/\delta)})]$. For any $c$-balanced  $(k,\phiin,\phiout)$-clusterable graph, let $\set{\hat{C}_i}_{i=1}^k$ be a $k$-partition obtained by the Algorithm~\ref{algo:privateClustering} (which invokes a $k$-means algorithm \textsc{KMeans} with an approximation ratio $\APT$). 
    Then, there exists a permutations $\sigma$ on $\set{1,\ldots, k}$ such that
    \[
        \vol_G (\hat{C}_i \triangle C_{\sigma(i)}) 
        \le O\Paren{\APT \cdot \frac{k^4}{c^2} \cdot \Paren{\frac{\phiout + \frac{1}{\lambda}}{\phiin^2} + \lambda\frac{n}{m}\frac{\log(2/\delta)}{\epsilon^2}}} \vol_G(C_{\sigma(i)})
    \]
    with probability $1-\exp(-\Omega(n))$.
\end{lemma}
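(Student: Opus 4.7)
The plan is to compare the noisy SDP output $\bs{X_2}$ to an explicit low-rank reference $\bs{M^*} := n \bs{D_G}^{1/2}\bs{X^*}\bs{D_G}^{1/2}$ built from the ground-truth partition, show via strong convexity plus Gaussian concentration that $\bs{X_2}$ is close to $\bs{M^*}$ in spectral norm, and then transfer this to closeness of their leading eigenspaces via the Davis--Kahan theorem, at which point Lemma \ref{lem:spectralandkmeans} delivers the misclassification bound. Here $\bs{X^*}\in\cD$ is the combinatorial reference $\bs{X^*}_{uv}=\tfrac{1}{n}$ when $u$ and $v$ lie in the same cluster $C_i$ and $0$ otherwise; equivalently $\bs{X^*}=\tfrac{1}{n}\sum_{i=1}^k \mathbf{1}_{C_i}\mathbf{1}_{C_i}^\top$. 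Its scaled version $\bs{M^*}$ is block-diagonal after reordering by cluster membership, with top $k$ eigenvalues $\vol_G(C_i)\ge\tfrac{2cm}{k}$ and corresponding orthonormal eigenvectors $\bs{\bar g_i}=\bs{D_G}^{1/2}\mathbf{1}_{C_i}/\sqrt{\vol_G(C_i)}$, while all remaining eigenvalues are $0$.

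First I would verify feasibility of $\bs{X^*}$ (PSD, entrywise nonnegative, $\bs{X^*}_{uu}=1/n$, and the constraint $\iprod{\bs{D_G}\bs{L_{K_V}}\bs{D_G},\bs{X^*}}\ge bm^2/n$) and bound its SDP objective as
\[
f(G,\bs{X^*}) \;\le\; \frac{2\phiout m}{n} + \frac{2m(1-b)}{\lambda n},
\]
using $|E(C_i,V\setminus C_i)|\le\phiout\vol_G(C_i)$ for the Laplacian term and $\sum_i\vol_G(C_i)^2=2m^2(1-b)$ for the regularization term. Combined with the $(\tfrac{2n}{\lambda m},\bs{D_G}^{1/2},\bs{D_G}^{1/2})$-strong convexity of $f(G,\cdot)$ established in the proof of Lemma \ref{lemma:sbm-stability} and the Pythagorean-type inequality of Lemma \ref{lem:triangle-inequality-strong-convexity}, this yields
\[
\normf{n\bs{D_G}^{1/2}(\bs{X_1}-\bs{X^*})\bs{D_G}^{1/2}}^2 \;\le\; 2m^2\bigl(\lambda\phiout + (1-b)\bigr).
\]
In parallel, Lemma \ref{lem:gaussian-spectral-concentration} gives $\|\bs{W}\|_2\le 3\sigma\sqrt{n}$ with probability $1-\exp(-\Omega(n))$, where $\sigma^2=24(\lambda+3)m\log(2/\delta)/\eps^2$.

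The triangle inequality controls $\|\bs{X_2}-\bs{M^*}\|_2$, and then Lemma \ref{lem:davis-kahan} applied with $\bs{H}=\bs{M^*}$, $\bs{\tilde H}=\bs{X_2}$, $\bs{A_0}=\Diag(\vol_G(C_i))_{i=1}^k$, and gap $\eta=\Omega(cm/k)$ (half the smallest nonzero eigenvalue of $\bs{M^*}$) produces, for each leading eigenvector $\bs{f_i}$ of $\bs{X_2}$, a linear combination $\bs{\hat g_i}=\sum_j\beta_j^{(i)}\bs{\bar g_j}$ with $\|\bs{f_i}-\bs{\hat g_i}\|_2\le\theta:=\|\bs{X_2}-\bs{M^*}\|_2/\eta$. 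The parameter assumptions on $\lambda$, $\phiout/\phiin^2$, and $m$ are exactly what is needed to force $\theta\le 1/(5k)$, so Lemma \ref{lem:spectralandkmeans} applies to the embedding $F(u)=\deg_G(u)^{-1/2}(\bs{f_1}(u),\dots,\bs{f_k}(u))^\top$ and yields the claimed $O(\APT\cdot k^2\cdot\theta^2)$ misclassification, after fixing the permutation $\sigma$ supplied by that lemma.

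The hard part will be converting the raw factor $\lambda\phiout+(1-b)$ produced by the stability step into the Cheeger-type factor $(\phiout+1/\lambda)/\phiin^2$ stated in the conclusion. Already for $c$-balanced partitions one has $(1-b)\ge 2/k$ (attained when all cluster volumes are equal), so a direct substitution would contribute at least $\Omega(k^3/c^2)$ to the misclassification rate -- far larger than the advertised $O(k^4\phiout/(c^2\phiin^2))$ scaling permitted in the regime $\phiout/\phiin^2=O(c^2 k^{-4})$. To close this gap I would either replace the combinatorial reference $\bs{X^*}$ by a spectral reference built from the bottom-$k$ eigenvectors of $\bs{\cL_G}$, or refine the Davis--Kahan step with a higher-order Cheeger eigengap of order $\Omega(m\phiin^2)$ on the cluster subspace, so that $\phiin^2$ enters the denominator and the $1/\lambda$ term arises cleanly from the regularization strength rather than from the balance parameter $(1-b)$. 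Once this refined bound on $\normf{n\bs{D_G}^{1/2}(\bs{X_1}-\bs{X^*})\bs{D_G}^{1/2}}$ is in place, the remaining pieces -- triangle inequality, Davis--Kahan, and Lemma \ref{lem:spectralandkmeans} -- are standard and give the claimed bound termwise, with the two Gaussian-noise terms yielding $\lambda n\log(2/\delta)/(m\eps^2)$ after plugging $\sigma^2$ into $\|\bs{W}\|_2^2/\eta^2$.
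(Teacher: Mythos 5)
Your overall skeleton matches the paper's: the reference matrix you call $\bs{M^*}=n\bs{D_G}^{1/2}\bs{X^*}\bs{D_G}^{1/2}$ is exactly the paper's $\bs{Z}$ (with $\bs{Z}_{uv}=\sqrt{\deg_G(u)\deg_G(v)}$ on same-cluster pairs), and the tail of the argument (triangle inequality with $\|\bs{W}\|_2$, Weyl to lower-bound $\mu_k$, Davis--Kahan with gap $\Omega(cm/k)$, then Lemma \ref{lem:spectralandkmeans}) is the paper's verbatim. The gap is in the one step that carries all the content: bounding $\normf{n\bs{D_G}^{1/2}\bs{X_1}\bs{D_G}^{1/2}-\bs{Z}}$. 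Your route via Lemma \ref{lem:triangle-inequality-strong-convexity} gives $2m^2(\lambda\phiout+(1-b))$, and since $1-b\ge 2/k$ for any partition this is $\Omega(m^2/k)$ no matter how clusterable the graph is, yielding $\theta^2=\Omega(k/c^2)\gg 1/(25k^2)$. As you yourself compute, this kills the argument; strong convexity of the regularized objective is simply the wrong tool here (the paper uses it only for the sensitivity/privacy analysis in Lemma \ref{lemma:sbm-stability}, never for utility). Your two proposed repairs are not carried out, and neither is what the paper does.

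The missing idea is Lemma \ref{lem:u1v1distance}. One bounds the \emph{optimal objective value} by the value at the feasible orthogonal-assignment point ($\le 2\phiout m+8m/\lambda$); since every term of the objective is nonnegative, this already bounds the intra-cluster edge energy $\sum_{(u,v)\in E(G\{C_i\})}\|\bs{\bar u}-\bs{\bar v}\|_2^2$ for each $i$. Then the inner conductance enters: the second eigenvalue of $\bs{\cL_{G\{C_i\}}}$ is at least $\phiin^2/2$ by Cheeger, and its variational characterization converts the small edge energy inside $C_i$ into a bound of $m\vol_G(C_i)\cdot\frac{4\phiout+16/\lambda}{\phiin^2}$ on the full degree-weighted pairwise sum $\sum_{u,v\in C_i}\|\bs{\bar u}-\bs{\bar v}\|_2^2\deg_G(u)\deg_G(v)$; the SDP's spreading constraint then handles the inter-cluster pairs. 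Finally, the constraints $0\le n[\bs{X_1}]_{uv}\le 1$ give the linearization $(n[\bs{X_1}]_{uv}-1)^2\le 2\,|n[\bs{X_1}]_{uv}-1|$ (and $(n[\bs{X_1}]_{uv})^2\le |n[\bs{X_1}]_{uv}|$), which turns these linear bounds into the Frobenius bound $\normf{n\bs{D_G}^{1/2}\bs{X_1}\bs{D_G}^{1/2}-\bs{Z}}\le 4m\sqrt{(\phiout+4/\lambda)/\phiin^2}$. This is where $\phiin^2$ enters the denominator and why the $(1-b)$ term never appears; without this step your proof does not reach the stated bound.
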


Note that \Cref{theo:maintheorem} follows by choosing a constant-approximation $k$-means algorithm \textsc{KMeans} (so that $\APT=O(1)$) and setting $\lambda = \sqrt{\frac{m\epsilon^2}{n\log(2/\delta)}}$ and letting $m\ge n\cdot\frac{\phiin^4}{\phiout^2}\cdot\frac{\log(2/\delta)}{\epsilon^2}$.

To prove Lemma \ref{lem:utility}, we need the following lemma.

\begin{lemma}
    Consider the SDP  (\hyperref[equ:SDP]{1}), if the input graph $G=(V,E)$ is a $c$-balanced $(k,\phiin,\phiout)$-clusterable graph, the solution satisfies
    \begin{equation*}
        \left\{
        \begin{aligned}
            \sum_{u,v\in C_i}\Paren{\norm{\bs{\bar{u}}-\bs{\bar{v}}}_2^2\deg_G(u)\deg_G(v)}
            &\le m\vol_G(C_i) \cdot \frac{\phiout + \frac{16}{\lambda}}{\phiin^2},\quad \forall i\in[k]\\
            \sum_{\substack{u\in C_i, v\in C_j, i\neq j}}\Paren{\norm{\bs{\bar{u}}-\bs{\bar{v}}}_2^2\deg_G(u)\deg_G(v)}
            &\ge m^2\Paren{2b-\frac{\phiout + \frac{16}{\lambda}}{\phiin^2}}.
        \end{aligned}
        \right.
    \end{equation*}
    \label{lem:u1v1distance}
\end{lemma}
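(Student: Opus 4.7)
The plan is to compare the SDP optimum against an explicit ``ground-truth'' feasible solution, then to transfer the resulting control on the edge term of the objective into cluster-by-cluster bounds through a Cheeger-type inequality. I would define the canonical candidate $\bs{\bar u}:=\bs{e}_{c(u)}\in\R^k$ for every $u\in V$, where $c(u)$ is the cluster index of $u$ and $\bs{e}_1,\dots,\bs{e}_k$ are the standard basis vectors. Unit norm and nonnegative pairwise inner products are immediate; the spreading constraint holds with equality because $\sum_{u,v\in V}\norm{\bs{\bar u}-\bs{\bar v}}_2^2\deg_G(u)\deg_G(v)=2\sum_{i\neq j}\vol_G(C_i)\vol_G(C_j)=2bm^2$. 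Evaluating the objective at this candidate, the edge term is at most $2\phiout m$ (sum $\condout(G,C_i)\le\phiout$ over $i\in[k]$ and use $\sum_i\vol_G(C_i)=2m$), while the regularization term is $\frac{2}{\lambda m}\sum_i\vol_G(C_i)^2=O(m/\lambda)$. Consequently the SDP optimum, and in particular its edge term at the true optimizer, is bounded by $2\phiout m + O(m/\lambda)$.

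Next I would push this bound into each cluster via a Poincar\'e inequality. The definition of $\phiin$ is tailored to the self-looped subgraph $G\{C_i\}$, so Cheeger's inequality applied to $G\{C_i\}$ yields, coordinate-wise for any scalar $f:C_i\to\R$, $\sum_{(u,v)\in E(G[C_i])}(f(u)-f(v))^2 \ge \tfrac{\phiin^2}{2}\sum_{u\in C_i}\deg_G(u)(f(u)-\bar f)^2$, where $\bar f$ is the degree-weighted mean. Summing over coordinates of the optimal SDP vectors and writing $\bs{\nu_i}:=\frac{1}{\vol_G(C_i)}\sum_{u\in C_i}\deg_G(u)\bs{\bar u}$ for the degree-weighted centroid gives $\sum_{u\in C_i}\deg_G(u)\norm{\bs{\bar u}-\bs{\nu_i}}_2^2 \le \frac{2}{\phiin^2}\sum_{(u,v)\in E(G[C_i])}\norm{\bs{\bar u}-\bs{\bar v}}_2^2$. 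The parallelogram inequality $\norm{\bs{\bar u}-\bs{\bar v}}_2^2\le 2\norm{\bs{\bar u}-\bs{\nu_i}}_2^2+2\norm{\bs{\bar v}-\bs{\nu_i}}_2^2$ then upgrades this to a pairwise bound of the form $\sum_{u,v\in C_i}\deg_G(u)\deg_G(v)\norm{\bs{\bar u}-\bs{\bar v}}_2^2\le O(\vol_G(C_i)/\phiin^2)\cdot\sum_{(u,v)\in E(G[C_i])}\norm{\bs{\bar u}-\bs{\bar v}}_2^2$. Plugging in the edge-term bound from the previous paragraph yields the first inequality of the lemma (up to constants).

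The second inequality then falls out by a direct subtraction: the spreading constraint guarantees that the total pairwise sum is at least $2bm^2$, and subtracting the intra-cluster upper bound summed over $i$ (which collapses via $\sum_i\vol_G(C_i)=2m$ to a $\frac{\phiout+O(1/\lambda)}{\phiin^2}\cdot 2m^2$ term) leaves the required lower bound on the inter-cluster part. I expect the main obstacle to be the Cheeger/Poincar\'e step in each cluster: one must carefully match the definition of $\phiin$ (stated in terms of $G\{C_i\}$ so that the self-loops preserve the degrees inherited from $G$) with the Laplacian quadratic form of $G[C_i]$, and then apply the scalar inequality coordinate-wise to the matrix-valued SDP variables while retaining the correct $\vol_G(C_i)$ scaling in the passage from distances-to-centroid to all pairwise distances. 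Recovering the specific constant $16$ in the stated bound is pure bookkeeping once the structural argument is in place.
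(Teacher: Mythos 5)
Your proposal is correct and follows essentially the same route as the paper: compare the optimum against the cluster-indicator feasible solution to bound the objective by $2\phiout m + O(m/\lambda)$, convert this into per-cluster pairwise bounds via Cheeger's inequality for $\lambda_2(\bs{\cL_{G\{C_i\}}}) \ge \phiin^2/2$, and obtain the inter-cluster lower bound by subtracting from the spreading constraint. The only cosmetic difference is that the paper uses the variational identity directly relating $\sum_{(u,v)\in E}\norm{\bs{\bar u}-\bs{\bar v}}_2^2$ to $\sum_{u,v}\deg_G(u)\deg_G(v)\norm{\bs{\bar u}-\bs{\bar v}}_2^2$, whereas you pass through the degree-weighted centroid and the parallelogram inequality, costing only a constant factor.
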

\begin{proof}
    As $G=(V,E)$ is a $c$-balanced $(k,\phiin,\phiout)$-clusterable graph, there exists a $c$-balanced partitioning $\{C_i\}_{i \in [k]}$ of $V$ and, for all $i \in [k]$, $\condin(G, C_i) \ge \phiin$ and $\condout(G, C_i) \le \phiout$.
    
    For $1\le i\le k$, we have $\condout(G, C_i)=\frac{|E\paren{C_i,V-C_i}|}{\vol_G\paren{C_i}}\le\phiout$. Summing it yields $\sum_{i=1}^k |E\paren{C_i,V-C_i}|\le \phiout\cdot\sum_{i=1}^k \vol_G\paren{C_i}=\phiout\cdot
    \vol_G\paren{V}=2\phiout m$. So the number of edges between clusters in $G$ is not greater than $\phiout m$.
    
    Now let us consider a feasible solution of the SDP (\hyperref[equ:SDP]{1}). For every vertex $u$ in the $\ell$-th cluster, assign unit vector $\bs{v_\ell}$ to $\bs{\bar{u}}$. We can let $\bs{v_1}, \bs{v_2},\ldots, \bs{v_k}$ be a set of orthogonal bases, as $k\le n$. For this feasible solution, the value of the objective function is not greater than $\phiout m\cdot 2 + \frac{2}{\lambda m}\cdot \sum_{j\in[k]}\vol_G(C_j)^2 \le 2\phiout m + \frac{2}{\lambda m}\cdot\vol_G(G)^2 = 2\phiout m+\frac{8m}{\lambda}$.
    
    Thus, for any $i \in [k]$, we have $\sum_{(u,v)\in G\set{C_i}}\norm{\bs{\bar{u}}-\bs{\bar{v}}}_2^2
    \le 2\phiout m+\frac{8m}{\lambda}$.
    Let $\mu$ be the second eigenvalue of the normalized Laplacian matrix $\bs{\cL_{G\set{C_i}}}$, we have
    \[
    \mu =
    \vol_G(C_i) \min_{\set{\bs{\bar{u}}}u\in V}\frac{\sum_{(u,v)\in G\set{C_i}}\Paren{\norm{\bs{\bar{u}}-\bs{\bar{v}}}_2^2}}{\sum_{u,v\in C_i}\Paren{\norm{\bs{\bar{u}}-\bs{\bar{v}}}_2^2}\deg_G(u)\deg_G(v)}
    \ge \frac{\phiin^2}{2}
    \]

    So $\sum_{u,v\in C_i}\Paren{\norm{\bs{\bar{u}}-\bs{\bar{v}}}_2^2\deg_G(u)\deg_G(v)} \le \vol_G(C_i) \cdot \frac{2\phiout m+\frac{8m}{\lambda}}{ \frac{1}{2}\phiin^2} = \vol_G(C_i) \cdot \frac{4\phiout m + \frac{16m}{\lambda}}{\phiin^2} = m\vol_G(C_i) \cdot \frac{4\phiout + \frac{16}{\lambda}}{\phiin^2}$, for all $i \in [k]$.
    
    By the definition of the SDP (\hyperref[equ:SDP]{1}), we know that 
    $\sum_{u,v\in V}\Paren{\norm{\bs{\bar{u}}-\bs{\bar{v}}}_2^2\deg_G(u)\deg_G(v)}\ge 2bm^2$.
    
    Thus, $\sum\limits_{\substack{u\in C_i,v\in C_j,i\neq j}}\Paren{\norm{\bs{\bar{u}}-\bs{\bar{v}}}_2^2\deg_G(u)\deg_G(v)} \ge 2bm^2 - \sum_{i\in[k]}\Paren{m\vol_G(C_i) \cdot \frac{4\phiout + \frac{16}{\lambda}}{\phiin^2}} \ge 2bm^2 - m \cdot \frac{4\phiout + \frac{16}{\lambda}}{\phiin^2} \sum_{i\in[k]}\vol_G(C_i) = m^2\Paren{2b-\frac{4\phiout + \frac{16}{\lambda}}{\phiin^2}}$.
\end{proof}

\begin{lemma}
\label{lem:X2minusZ}
    Consider the setting in Theorem \ref{theo:maintheorem} and Algorithm \ref{algo:privateClustering}, with probability $1-\exp(-\Omega\paren{n})$,
    \begin{equation*}
        \Norm{\bs{X_2}-\bs{Z}}_2\le 2m \cdot \Paren{2\sqrt{\frac{\phiout + \frac{4}{\lambda}}{\phiin^2}} + 3\sqrt{\frac{6(\lambda+3)n}{m}\frac{\log(2/\delta)}{\epsilon^2}}}
    \end{equation*}
    where $\bs{Z}$ is a matrix defined as: $\bs{Z}_{uv}=\sqrt{\deg_G(u)\deg_G(v)}$, if $u,v$ are in same cluster; $\bs{Z}_{uv} =0$ if $u,v$ are in different clusters.
\end{lemma}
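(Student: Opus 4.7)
The plan is to apply the triangle inequality to the decomposition
\[
\bs{X_2}-\bs{Z}=\Paren{n\bs{D_G}^{1/2}\bs{X_1}\bs{D_G}^{1/2}-\bs{Z}}+\bs{W},
\]
and bound the two summands separately. For the Gaussian noise $\bs{W}$, I invoke Lemma~\ref{lem:gaussian-spectral-concentration}: the per-entry variance is $\sigma^2=24(\lambda+3)m\log(2/\delta)/\epsilon^2$, so taking $t=\sqrt{n}$ gives $\|\bs{W}\|_2\le 3\sigma\sqrt{n}=6\sqrt{6(\lambda+3)mn\log(2/\delta)/\epsilon^2}$ with probability $1-\exp(-\Omega(n))$, which matches the second summand of the claimed bound.

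For the deterministic matrix $\bs{M}:=n\bs{D_G}^{1/2}\bs{X_1}\bs{D_G}^{1/2}-\bs{Z}$, I will use $\|\bs{M}\|_2\le\|\bs{M}\|_F$ and evaluate the Frobenius norm entry-wise. Since $\bs{X_1}$ equals $1/n$ times the Gram matrix of the SDP vectors $\{\bs{\bar{u}}\}_{u\in V}$, the entry $\bs{M}_{uv}$ equals $\sqrt{\deg_G(u)\deg_G(v)}\cdot(\iprod{\bs{\bar{u}},\bs{\bar{v}}}-\mathbf{1}[u,v\text{ in same cluster}])$. Using the identity $\iprod{\bs{\bar{u}},\bs{\bar{v}}}=1-\tfrac{1}{2}\|\bs{\bar{u}}-\bs{\bar{v}}\|_2^2$ (valid because $\|\bs{\bar{u}}\|_2=1$ for all $u$ by the SDP), I split $\|\bs{M}\|_F^2$ into intra-cluster and inter-cluster contributions.

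For intra-cluster pairs $(u,v)\in C_i\times C_i$, the squared entry equals $\tfrac{1}{4}\deg_G(u)\deg_G(v)\|\bs{\bar{u}}-\bs{\bar{v}}\|_2^4$; using $\|\bs{\bar{u}}-\bs{\bar{v}}\|_2^2\le 2$, this is at most $\tfrac{1}{2}\deg_G(u)\deg_G(v)\|\bs{\bar{u}}-\bs{\bar{v}}\|_2^2$, and summing over $i$ and applying the first inequality of Lemma~\ref{lem:u1v1distance} bounds the total intra-cluster contribution by $O(m^2(\phiout+1/\lambda)/\phiin^2)$. For inter-cluster pairs, the SDP constraints $\iprod{\bs{\bar{u}},\bs{\bar{v}}}\ge 0$ and $\|\bs{\bar{u}}\|_2=1$ force $\iprod{\bs{\bar{u}},\bs{\bar{v}}}\in[0,1]$, hence $\iprod{\bs{\bar{u}},\bs{\bar{v}}}^2\le\iprod{\bs{\bar{u}},\bs{\bar{v}}}=1-\tfrac{1}{2}\|\bs{\bar{u}}-\bs{\bar{v}}\|_2^2$. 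The constant piece of this upper bound, weighted by $\deg_G(u)\deg_G(v)$, sums to $\sum_{i\ne j}\vol_G(C_i)\vol_G(C_j)=bm^2$ by the definition of $b$, and this is exactly cancelled by the leading $2bm^2$ when the second inequality of Lemma~\ref{lem:u1v1distance} is applied (with the $-\tfrac{1}{2}$ prefactor) to the remaining term; the residue is another $O(m^2(\phiout+1/\lambda)/\phiin^2)$. Combining yields $\|\bs{M}\|_F^2\le 6m^2(\phiout+4/\lambda)/\phiin^2$, so $\|\bs{M}\|_F\le\sqrt{6}\,m\sqrt{(\phiout+4/\lambda)/\phiin^2}\le 4m\sqrt{(\phiout+4/\lambda)/\phiin^2}$, matching the first summand.

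The main obstacle is the inter-cluster sum: a naive bound on $\sum\deg_G(u)\deg_G(v)\iprod{\bs{\bar{u}},\bs{\bar{v}}}^2$ through the SDP objective would pick up a spurious factor of $\lambda$. The key trick is the linearization $\iprod{\bs{\bar{u}},\bs{\bar{v}}}^2\le\iprod{\bs{\bar{u}},\bs{\bar{v}}}$, which is available precisely because of the two SDP constraints (non-negativity and unit norm), combined with the telescoping cancellation between the definition of $b$ and the lower bound from Lemma~\ref{lem:u1v1distance}. Assembling the two bounds via the triangle inequality then completes the argument.
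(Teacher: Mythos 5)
Your proposal is correct and follows essentially the same route as the paper's proof: triangle inequality on $\bs{X_2}-\bs{Z}=(n\bs{D_G}^{1/2}\bs{X_1}\bs{D_G}^{1/2}-\bs{Z})+\bs{W}$, Gaussian spectral concentration with $t=\sqrt{n}$ for $\bs{W}$, and a Frobenius-norm bound on the deterministic part obtained by splitting into intra- and inter-cluster entries, linearizing the squared entries via $\iprod{\bs{\bar{u}},\bs{\bar{v}}}\in[0,1]$, and invoking both inequalities of Lemma~\ref{lem:u1v1distance} with the cancellation against $bm^2$. The constants you track are consistent with the stated bound.
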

\begin{proof}
    According to Lemma \ref{lem:u1v1distance}, for $u,v$ in same cluster $C_i$ : 
    
    \begin{equation*}
    \begin{aligned}
        &\sum\limits_{u,v\in C_i}\Paren{n\sqrt{\deg_G(u)\deg_G(v)}[\bs{X_1}]_{uv}-\bs{Z}_{uv}}^2
        =\sum\limits_{u,v\in C_i}\Paren{\Paren{n[\bs{X_1}]_{uv}-1}^2\deg_G(u)\deg_G(v)}\\
        &\qquad\le 2\sum\limits_{u,v\in C_i}\Paren{\Abs{n[\bs{X_1}]_{uv}-1}\deg_G(u)\deg_G(v)}
        =\sum\limits_{u,v\in C_i}\Paren{\norm{\bs{\bar{u}}-\bs{\bar{v}}}_2^2\deg_G(u)\deg_G(v)}\\
        &\qquad\le m\vol_G(C_i) \cdot \frac{4\phiout + \frac{16}{\lambda}}{\phiin^2}
    \end{aligned}
    \end{equation*}

    For $u,v$ in different clusters : 
    \begin{equation*}
    \begin{aligned}
        &\sum\limits_{\substack{u\in C_i,v\in C_j\\i\neq j}}\Paren{n\sqrt{\deg_G(u)\deg_G(v)}[\bs{X_1}]_{uv}-\bs{Z}_{uv}}^2
        =\sum\limits_{\substack{u\in C_i,v\in C_j\\i\neq j}}\Paren{\Paren{n[\bs{X_1}]_{uv}}^2\deg_G(u)\deg_G(v)}\\
        &\qquad\le \sum\limits_{\substack{u\in C_i,v\in C_j\\i\neq j}}\Paren{\Abs{n[\bs{X_1}]_{uv}}\deg_G(u)\deg_G(v)}
        =\frac{1}{2}\sum\limits_{\substack{u\in C_i,v\in C_j\\i\neq j}}\Paren{2-\norm{\bs{\bar{u}}-\bs{\bar{v}}}_2^2}\deg_G(u)\deg_G(v)\\
        &\qquad= bm^2-\frac{1}{2}\sum\limits_{\substack{u\in C_i,v\in C_j\\i\neq j}}\Paren{\norm{\bs{\bar{u}}-\bs{\bar{v}}}_2^2\deg_G(u)\deg_G(v)}
        \le m^2 \cdot \frac{2\phiout + \frac{8}{\lambda}}{\phiin^2}
    \end{aligned}
    \end{equation*}

    Combing these, we have: 
    $\Norm{n\bs{D_G}^{1/2}\bs{X_1}\bs{D_G}^{1/2}-\bs{Z}}_2
        \le \Normf{n\bs{D_G}^{1/2}\bs{X_1}\bs{D_G}^{1/2}-\bs{Z}}\\
        = \sqrt{\sum_{u,v}\Paren{n\sqrt{\deg_G(u)\deg_G(v)}[\bs{X_1}]_{uv}-\bs{Z}_{uv}}^2}
        \le 4m \cdot \sqrt{\frac{\phiout + \frac{4}{\lambda}}{\phiin^2}}$.
    
    Algorithm~\ref{algo:privateClustering} uses $\bs{W} \sim \cN\Paren{0, 24\Paren{\lambda+3}m \cdot \frac{\log(2/\delta)}{\eps^2}}^{n\times n}$, and we choose $t=\sqrt{n}$ in \Cref{lem:gaussian-spectral-concentration}, so with probability $1-\exp(-\Omega\paren{n})$,
        $\Norm{\bs{X_2}-\bs{Z}}_2
        \le \Norm{\bs{W}}_2 + \Norm{n\bs{D_G}^{1/2}\bs{X_1}\bs{D_G}^{1/2}-\bs{Z}}_2
        \le 3\sqrt{n} \cdot \sqrt{24\Paren{\lambda+3}m \cdot\frac{\log(2/\delta)}{\eps^2}}
        + 4m \cdot \sqrt{\frac{\phiout + \frac{4}{\lambda}}{\phiin^2}}
        = 2m \cdot \Paren{2\sqrt{\frac{\phiout + \frac{4}{\lambda}}{\phiin^2}} + 3\sqrt{\frac{6(\lambda+3) n}{m}\frac{\log(2/\delta)}{\epsilon^2}}}.$
\end{proof}

\begin{proof}[Proof of Lemma \ref{lem:utility}]
Let $\gamma = 2\sqrt{\frac{\phiout + \frac{4}{\lambda}}{\phiin^2}} + 3\sqrt{\frac{6(\lambda+3) n}{m}\frac{\log(2/\delta)}{\epsilon^2}}$.
According to \Cref{lem:X2minusZ}, with probability $1-\exp(-\Omega\paren{n})$, it holds that $\Norm{\bs{X_2}-\bs{Z}}_2 \le 2\gamma m$.

We denote the eigenvalues of the matrix $\bs{X_2}$ by $\mu_1 \ge \cdots \ge \mu_n$, with their corresponding orthonormal eigenvectors $\bs{f_1}, \cdots, \bs{f_n}$.
We denote the eigenvalues of the matrix $\bs{Z}$ by $\nu_1 \ge \cdots \ge \nu_n$, with their corresponding orthonormal eigenvectors $\bs{g_1}, \cdots, \bs{g_n}$.

Let $\bs{Y} = [\bs{f_1},\cdots,\bs{f_n}], \bs{Q} = [\bs{g_1},\cdots,\bs{g_n}]$, and let $\bs{A} = \Diag(\mu_1,\cdots,\mu_n), \bs{\Lambda} = \Diag(\nu_1,\cdots,\nu_n)$. Then $\bs{X_2} = \bs{Y}\bs{A}\bs{Y}^\top,\bs{Z} = \bs{Q}\bs{\Lambda}\bs{Q}^\top$ are the eigen-decompositions of $\bs{X_2},\bs{Z}$, respectively.

As $\set{\bs{g_1},\cdots,\bs{g_n}}$ is a set of orthogonal bases in $\R^n$, for every $i\in[n]$, $\bs{f_i}$ is the linear combination of eigenvector $\bs{g_1},\cdots,\bs{g_n}$. We write $\bs{f_i}$ as  $\bs{f_i}=\beta_1^{(i)}\bs{g_1}+\cdots+\beta_n^{(i)}\bs{g_n}$, where $\beta_j^{(i)}\in \mathbb{R}$.

By the definition of $\bs{Z}$, we know that $\bs{Z}$ is composed of $k$ rank-$1$ matrices of sizes $|C_1|,|C_2|,\cdots,|C_k|$, respectively. Let these matrices be $\bs{M_1},\bs{M_2},\cdots,\bs{M_k}$ such that $\bs{M_i}\in\mathbb{R}^{|C_i|\times |C_i|}$, for each $i\in [k]$. For each $j\in [k]$, note that the eigenvalues of $\bs{M_j}$ are $\vol_G(C_j),0,\cdots,0$, where the multiplicities of $0$ is $|C_j|-1$. So we have $\nu_{k+1} = \cdots = \nu_n = 0$, and $\nu_1,\cdots,\nu_k$ are equal to $\vol_G(C_1),\cdots,\vol_G(C_k)$, respectively, and $\nu_{k} = \min_{i\in[k]}\vol_G(C_i)\ge \frac{2cm}{k}$.

By Lemma \ref{lem:Weyl}, we have
$\mu_k
\ge \nu_k - \Norm{\bs{X_2}-\bs{Z}}_2
\ge 2m\Paren{\frac{c}{k} - \gamma}$.

We apply Theorem \ref{lem:davis-kahan} with $\bs{H}=\bs{X_2}$, $\bs{E_0}=\bs{Y}_{[k]}$, $\bs{E_1}=\bs{Y}_{-[k]}$, $\bs{A_0}=\bs{A}_{[k]}$, $\bs{A_1}=\bs{A}_{-[k]}$, and $\bs{\widetilde{H}}=\bs{Z}$, $\bs{F_0}=\bs{Q}_{[k]}$, $\bs{F_1}=\bs{Q}_{-[k]}$, $\bs{\Lambda_0}=\bs{\Lambda}_{[k]}$, $\bs{\Lambda_1}=\bs{\Lambda}_{-[k]}$, $\eta = |\mu_k-\nu_{k+1}| =|\mu_k|\ge 2m\Paren{\frac{c}{k} - \gamma}$. Therefore, by Theorem \ref{lem:davis-kahan} we have
    \begin{equation*}
    \begin{aligned}
        \Norm{\bs{Q}_{-[k]}^\top \bs{Y}_{[k]}}_2
        =\Norm{\bs{F_1}^\top \bs{E_0}}_2
        \le \frac{\Norm{\bs{F_1}^\top (\bs{Z}-\bs{X_2}) \bs{E_0}}_2}{\eta}
        \le \frac{\Norm{\bs{X_2}-\bs{Z}}_2}{\eta}
        \le \frac{\gamma k}{c-\gamma k}.
    \end{aligned} 
    \end{equation*}

Thus we have $\sum_{j=k+1}^n \Paren{\beta_j^{(i)}}^2 = \Norm{\bs{Q}_{-[k]}^\top \bs{f_i}}_2^2 \le \Norm{\bs{Q}_{-[k]}^\top \bs{Y}_{[k]}}_2^2 \le \frac{\gamma^2 k^2}{(c-\gamma k)^2}$, for all $i\in [k]$.

For all $i\in [k]$, let $ \hat{\bs{g_i}} = \beta_1^{(i)}\bs{g_1}+\cdots+\beta_k^{(i)}\bs{g_n}$, then
    $\norm{\bs{f_i} - \hat{\bs{g_i}}}_2^2 = \sum_{j=k+1}^n \Paren{\beta_j^{(i)}}^2 \le \frac{\gamma^2 k^2}{(c-\gamma k)^2}.$

Note that our Algorithm \ref{algo:privateClustering} invokes \textsc{KMeans} algorithm on the input set the input set $\{F(u)\}_{u\in V}$, where $F(u)=\deg_G(u)^{-1/2}\paren{\bs{f_1}(u),\bs{f_2}(u),...,\bs{f_k}(u)}^\top$. 
By Lemma \ref{lem:spectralandkmeans}, we know that the output partition $\{\hat{C}_i\}_{i \in [k]}$ satisfies:
       $ \vol_G (\hat{C}_i \triangle C_{\sigma(i)}) 
        \le \APT \cdot k^2 \frac{\gamma^2 k^2}{(c-\gamma k)^2} \vol_G(C_{\sigma(i)})
        = O\Paren{\APT \cdot \frac{k^4}{c^2} \cdot \Paren{\frac{\phiout + \frac{1}{\lambda}}{\phiin^2} + \lambda\frac{n}{m}\frac{\log(2/\delta)}{\epsilon^2}}} \vol_G(C_{\sigma(i)})$,
for some permutation $\sigma$ on $\set{1,\ldots, k}$.
\end{proof}

\section{Experiments}

To evaluate the empirical trade-off between privacy and utility of our algorithm, we perform exemplary experiments on synthetic datasets sampled from SBMs. As a baseline, we compare our algorithm to an approach based on randomized response as described in \cite{seif2022differentially}. The algorithm based on randomized response generates a noisy output by flipping each bit of the adjacency matrix with some probability $p_{RR}$ (for undirected graphs, it flips a single coin for opposing directed edges). It was shown that randomized response is $\epsilon$-DP for $p_{RR} \geq 1 / (1 + e^\epsilon)$.

Since randomized response is an $\epsilon$-DP algorithm, we are interested in the utility improvement that we can gain by using $(\epsilon, \delta)$-DP when the utility of randomized response becomes insufficient. Our hypothesis is that when the noise added to the adjacency matrix by randomized response has roughly the same magnitude as the clustering signal, the output of Algorithm~\ref{algo:privateClustering} still has significant utility. In particular, we consider the case where the difference between the empirical probability of intra-cluster edges and inter-cluster edges after randomized response is only a constant fraction of the original difference between these probabilities in the vanilla SBM graph.

\textbf{Implementation.} We model the SDP  (\hyperref[equ:SDP_matrix]{2}) in CVXPY 1.3.2 and use SCS 3.2.3 as SDP solver. We use NumPy 1.23.5 for numerical computation and scikit-learn 1.3 for an implementation of k-means++. For our algorithm, we use $b = (k-1) / k$ and $\lambda = c \cdot \sqrt{\frac{m\epsilon^2}{n\log(2/\delta)}}$, where $c$ is a trade-off constant that we fix for each SBM parameterization before sampling the input datasets. For randomized response, we replace the objective with $\argmin_{\bs{X}\in \cD}\iprod{\bs{L_G},\bs{X}}$, i.e., we remove the regularizer term, and set $p_{RR} = 1 / (1 + e^\epsilon)$. The regularizer facilitates a trivial solution and is not needed because randomized response is differentially private.

\textbf{Datasets and setup.} We sample datasets from a stochastic block model $\mathrm{SBM}(n,k,p,q)$ with $k$ blocks, each of size $n/k$, intra-cluster edge probability $p$ and inter-cluster edge probability $q$. We consider the sampled graphs as instances of well-clustered graphs. For our experiments, we use $\epsilon = 1$ and $\delta = 1/n^2$. Since $p_{RR} = 1 / (1 + e) \approx 0.27$, we choose small values of $p,q$ so that $\lvert p-q \rvert = 0.2$. For each parameterization, we sample 10 SBM graphs and run each algorithm 100 times to boost the statistical stability of the evaluation.

\begin{table}[ht]
    \centering
    \begin{tabular}{rrrrrrrrr} 
    \toprule
    \multicolumn{5}{c}{Parameters} & \multicolumn{2}{c}{RR+SDP} & \multicolumn{2}{c}{Algorithm \ref{algo:privateClustering}} \\ \cmidrule(r){1-5} \cmidrule(lr){6-7} \cmidrule(l){8-9}
    n & k & p & q & c & AMI & NMI & AMI & NMI \\
    \midrule 
    100 & 2 & 0.20 & 0.00 & 5e-6   & 0.10 & 0.11 & \textbf{0.17} & 0.19 \\
    100 & 2 & 0.25 & 0.05 & 3.5e-6 & 0.10 & 0.11 & \textbf{0.14} & 0.15 \\
    100 & 2 & 0.30 & 0.10 & 2e-6   & 0.09 & 0.10 & \textbf{0.26} & 0.27 \\
    150 & 3 & 0.20 & 0.00 & 3e-6   & 0.07 & 0.08 & \textbf{0.19} & 0.20 \\
    150 & 3 & 0.25 & 0.05 & 8e-7 & 0.06 & 0.06 & \textbf{0.57} & 0.58 \\
    150 & 3 & 0.30 & 0.10 & 7e-7   & 0.06 & 0.07 & \textbf{0.35} & 0.55 \\
    \bottomrule
    \end{tabular}
    \caption{Adjusted mutual information (AMI) and V-measure / normalized mutual information (NMI) of RR+SDP and our algorithm. Reported numbers are median values over 100 runs on the same graph, over 10 different graphs sampled from $SBM(n,k,p,q)$.}
    \label{tab:exp_comp}
\end{table}

\textbf{Evaluation.} The evaluation of the comparison between RR+SDP and Algorithm~\ref{algo:privateClustering} is shown in \Cref{tab:exp_comp}. We report the median over the repetitions, over the datasets, of the adjusted (AMI) and the normalized mutual information (NMI) between the ground truth clusters of the SBM model and the clustering reported by the algorithm as a measure of the mutual dependence between these two (larger values are better). From the results, we see that the noisy adjacency matrix output by randomized response obfuscates most of the signal from the original adjacency matrix so that the solution of the SDP has low utility. On the other hand, we see that the regularizer term in the SDP and the noise added to the solution recovered significantly more information from the ground truth: Using Algorithm~\ref{algo:privateClustering} instead of randomized response can lead to $(\epsilon,\delta)$-differentially private solution with significantly improved quality, and consistently did so in our experiments.

\textbf{Running time.} For reference and reproducibility, we provide some observations on the running times and scalability. The experiments were run on a single Google Cloud n2-standard-128 instance. The CPU time (single core running time) for one run was about 2 minutes. The number of runs we performed is 100 per input graph, times 10 input graphs, times 6 SBM parameterizations, times 2 algorithms, i.e., 12,000 runs. The running times roughly scale as follow as a function of the input size: for $n=400$ about 12 minutes, for $n=600$ about 195 minutes, for $n=800$ about 790 minutes.

\section{Lower Bound}
\label{sec:lowerbound}

In this section, we show that approximation algorithms for recovering the clusters of well-clustered, sparse graphs cannot satisfy pure $\epsilon$-DP for small error. Given a cluster membership vector $\bs{u} \in \{-1,1\}^{n}$ of a graph $G=(V,E)$ that assigns each vertex to one of two clusters (labeled $-1$ and $1$), and given a ground truth vector $\bs{u_G} \in \{-1,1\}^n$, we define the misclassification rate $\error(\bs{u}, \bs{u_G}) = \error_G(\bs{u}) = \Paren{n - \min(\iprod{\bs{u},\bs{u_G}}, \iprod{\bs{-u},\bs{u_G}}} / (2n)$. In other words, the misclassification rate is the minimum number of assignment changes that are required to turn $\bs{u_G}$ into one of $\bs{u}$ and $\bs{-u}$. It is known that $\error$ is a semimetric.

\begin{lemma}[Lemma 5.25, \cite{chen2023private}]
    For any $\bs{u}, \bs{v} \in \{-1,1\}^n$, $\error$ is a semimetric.
\end{lemma}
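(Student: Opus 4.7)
The plan is to verify each property of a semimetric in turn: non-negativity, $\error(\bs{u},\bs{u}) = 0$, and symmetry, together with the triangle inequality (included for completeness in case the notion of ``semimetric'' in force here retains it). Non-negativity holds because $\abs{\iprod{\bs{u},\bs{v}}} \le n$ for any $\bs{u},\bs{v}\in\{-1,1\}^n$, making the numerator in the definition of $\error$ nonnegative. Symmetry is immediate from $\iprod{\bs{u},\bs{v}} = \iprod{\bs{v},\bs{u}}$ and $\iprod{-\bs{u},\bs{v}} = \iprod{\bs{u},-\bs{v}}$, so the optimizing quantity inside $\error$ is unchanged by swapping arguments. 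That $\error(\bs{u},\bs{u})=0$ follows from $\iprod{\bs{u},\bs{u}} = n$.

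The convenient reformulation I would use is to rewrite $\error$ in terms of the Hamming distance $d_H$ via $d_H(\bs{u},\bs{v}) = (n - \iprod{\bs{u},\bs{v}})/2$, obtaining
\[
    \error(\bs{u},\bs{v}) = \frac{1}{n}\min_{s\in\{-1,+1\}} d_H(\bs{u},\, s\bs{v}).
\]
Thus $\error$ is (up to normalization) the Hamming distance modulo a global sign flip. The key ingredient is the sign-invariance $d_H(s\bs{u}, s\bs{v}) = d_H(\bs{u},\bs{v})$, which holds because simultaneously flipping both arguments does not change the number of coordinates on which they disagree.

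For the triangle inequality $\error(\bs{u},\bs{w}) \le \error(\bs{u},\bs{v}) + \error(\bs{v},\bs{w})$, I would pick optimizing signs $s_1,s_2 \in \{\pm 1\}$ so that $n\cdot \error(\bs{u},\bs{v}) = d_H(\bs{u},s_1\bs{v})$ and $n\cdot \error(\bs{v},\bs{w}) = d_H(\bs{v},s_2\bs{w})$. Sign-invariance rewrites the second quantity as $d_H(s_1\bs{v},\, s_1 s_2\bs{w})$, after which the ordinary triangle inequality for Hamming distance gives
\[
    d_H(\bs{u},\, s_1 s_2 \bs{w}) \;\le\; d_H(\bs{u}, s_1\bs{v}) + d_H(s_1\bs{v},\, s_1 s_2 \bs{w}).
\]
Since $s_1 s_2 \in \{\pm 1\}$ is one of the two sign choices available in the min defining $\error(\bs{u},\bs{w})$, we have $n\cdot\error(\bs{u},\bs{w}) \le d_H(\bs{u},\, s_1 s_2\bs{w})$. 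Chaining and dividing by $n$ yields the desired inequality.

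There is no real obstacle; the only subtlety is choosing the right composite sign $s_1 s_2$ as a witness for $\error(\bs{u},\bs{w})$ and using the sign-invariance of $d_H$ to align the two separately optimized sign choices into a single Hamming-distance triangle inequality.
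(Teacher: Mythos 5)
Your proof is correct. Note, however, that the paper itself does not prove this lemma at all --- it is imported verbatim as Lemma~5.25 of \cite{chen2023private} --- so there is no in-paper argument to compare against; what you have written is a self-contained justification of the citation. Your route is the natural one: rewriting $\error$ as $\frac{1}{n}\min_{s\in\{\pm1\}} d_H(\bs{u}, s\bs{v})$ (which, incidentally, silently repairs what appears to be a $\min$/$\max$ typo in the paper's displayed formula --- taking the \emph{minimum} of the two inner products would give the \emph{maximum} Hamming distance, contradicting the surrounding prose), and then reducing everything to properties of the Hamming distance. The sign-invariance $d_H(s\bs{u},s\bs{v})=d_H(\bs{u},\bs{v})$ and the choice of the composite witness $s_1 s_2$ are exactly the right ingredients for the triangle inequality, and including the triangle inequality is the correct call, since the packing argument in Section~6 genuinely uses it. One small point worth making explicit: the reason $\error$ is only a \emph{semi}metric (rather than a metric) is that it fails the identity of indiscernibles, e.g.\ $\error(\bs{u},-\bs{u})=0$ for $\bs{u}\neq-\bs{u}$; your proof establishes all the properties that do hold, which is all the lemma claims.
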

Now we are ready to state and prove our lower bound. 
\begin{theorem}
    For $\phiin, \phiout \in [0,1]$, let $G$ be a $(2, \phiin, \phiout)$-clusterable graph, and let $\eta < 1/2$. Then, any approximate algorithm with failure probability $\eta$ and misclassification rate $\zeta$ cannot satisfy $\epsilon$-DP for $\epsilon < 2\ln(1/(9e\zeta) / d$ on $d$-regular graphs.
\end{theorem}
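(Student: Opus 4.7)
The plan is a packing-based impossibility argument. Assuming, for contradiction, that an $\epsilon$-DP algorithm $M$ attains misclassification rate $\zeta$ with failure probability at most $\eta<1/2$ under $\epsilon<2\ln(1/(9e\zeta))/d$, I would build a family $\mathcal F$ of $d$-regular $(2,\phi_{\text{in}},\phi_{\text{out}})$-clusterable graphs on a common vertex set satisfying three conditions simultaneously: (i) any two graphs in $\mathcal F$ differ in at most $d/2$ edges; (ii) their ground-truth cluster vectors $\bs{u_G}$ are pairwise more than $2\zeta$ apart in the $\error$ semimetric; and (iii) $|\mathcal F|\ge 1/(9e\zeta(1-\eta))$.

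Given such a packing, the argument is short. Because $\error$ is a semimetric (Lemma 5.25 of \cite{chen2023private}), the balls $B_\zeta(\bs{u_G}):=\{\bs{v}\colon \error(\bs{v},\bs{u_G})\le\zeta\}$ are pairwise disjoint across $G\in\mathcal F$. Fix any $G^\star\in\mathcal F$. Pure $\epsilon$-DP combined with group privacy across the at-most-$d/2$ edge differences between $G^\star$ and any $G\in\mathcal F$ yields
\[
\Pr[M(G^\star)\in B_\zeta(\bs{u_G})]\;\ge\; e^{-\epsilon d/2}\Pr[M(G)\in B_\zeta(\bs{u_G})]\;\ge\; e^{-\epsilon d/2}(1-\eta),
\]
where the last inequality uses the assumed success guarantee on $G$. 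Summing over $G\in\mathcal F$ and invoking disjointness of the balls gives $|\mathcal F|(1-\eta)e^{-\epsilon d/2}\le 1$. Taking logarithms and substituting the packing lower bound $|\mathcal F|\ge 1/(9e\zeta(1-\eta))$ rearranges to $\epsilon \ge 2\ln(1/(9e\zeta))/d$, contradicting the standing hypothesis.

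The main obstacle is the packing construction, since conditions (i) and (ii) pull in opposite directions: only $d/2$ edge flips may separate neighbors in $\mathcal F$, yet these few flips must shift the optimal clustering by more than a $2\zeta$ fraction of vertices. My intended route is to start from a base $d$-regular clusterable graph in which every cluster boundary is \emph{fragile} -- many vertices have close to $d/2$ neighbors in each cluster -- so that degree-preserving edge swaps around any chosen set of $\Theta(\zeta n)$ boundary vertices reassign those vertices' ground-truth labels while perturbing only $O(d)$ total edges (amortized across the swap template). Iterating this over many disjoint choices of swap-targets, indexed by a sphere packing in the Hamming cube of cluster labels, should produce exponentially many members; verifying that each member remains $(2,\phi_{\text{in}},\phi_{\text{out}})$-clusterable after all swaps and pinning down the constant $9e$ from the packing-cardinality / $(1-\eta)$ bookkeeping are the delicate quantitative steps I would expect to occupy most of the proof.
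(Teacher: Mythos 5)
Your high-level skeleton (a packing of clusterable graphs, group privacy across their edge differences, disjointness of the $\zeta$-balls around their ground truths) is the same as the paper's, but your quantitative plan has a gap that I do not think can be repaired in the form you propose. You require a family $\mathcal F$ in which any two members differ in at most $d/2$ edges yet have ground-truth vectors more than $2\zeta$ apart, i.e.\ differing on more than $2\zeta n$ vertices. These two requirements are incompatible for any $\zeta$ that is not $O(\phiout + d/n)$: if $H$ is obtained from $G$ by editing only $d/2$ edges, then a partition $\bs{x_i}$ that disagrees with $\bs{x_G}$ on $2\zeta n$ vertices has, in $H$, at least roughly $2\zeta n d - O(\phiout n d) - d/2$ cross edges (each relabeled vertex contributes $\Omega(d)$ edges into what is now the wrong side), so $H$ cannot be $(2,\phiin,\phiout)$-clusterable with ground truth $\bs{x_i}$. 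Your proposed fix --- amortizing the swaps so that relabeling $\Theta(\zeta n)$ boundary vertices costs only $O(d)$ total edge changes --- cannot work in a $d$-regular graph for the same reason: moving the ground truth by $\Theta(\zeta n)$ vertices necessarily costs $\Theta(\zeta n d)$ edge edits. This is exactly what the paper's construction pays: matching the $O(\zeta n)$ mismatched vertices in pairs and swapping their neighborhoods yields edit distance $\Theta(\zeta n d)$, hence a group-privacy factor of $\exp(\epsilon \zeta n d)$ rather than your $\exp(\epsilon d/2)$.

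Once the edit distance is $\Theta(\zeta n d)$, your constant-size family with the ``sum of disjoint ball probabilities is at most $1$'' bound is far too weak, since $\exp(-\epsilon \zeta n d)$ is exponentially small in $n$. The paper compensates in two ways you would need to adopt: (a) the packing is exponentially large, $p \ge (1/(9e\zeta))^{2\zeta n}$, obtained from a maximal $2\zeta$-packing of an $O(\zeta)$-ball in the Hamming cube and a volume ratio of binomial coefficients; and (b) instead of summing to $1$, it uses the failure probability: the balls $\ball(\bs{x_i},\zeta)$ are disjoint from $\ball(\bs{x_G},\zeta)$, so their probabilities under $\mathcal A(G)$ sum to at most $\eta$, and averaging gives some $i$ with $\Pr[\mathcal A(G)\in\ball(\bs{x_i},\zeta)]\le \eta/p$. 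Combining $1-\eta \le \exp(\epsilon\zeta n d)\,\eta/p$ with the bound on $p$ is what produces $\epsilon \ge 2\ln(1/(9e\zeta))/d$; the $n$'s cancel between $\ln p = \Theta(\zeta n\ln(1/\zeta))$ and the group-privacy exponent $\zeta n d$. Your derivation reaches the same final expression only because you reverse-engineered the packing size to make it come out, not because the intermediate objects exist.
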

\begin{proof}
    We follow a packing argument by \cite{chen2023private}. Consider the set $S = \{ \bs{p} \in \{ -1, 1 \}^n \mid \1^\top \bs{p} = 0 \}$ equipped with the semimetric $\error$. Let $G$ be a balanced, $d$-regular, $(2, \phiin, \phiout)$-clusterable graph with cluster membership vector $\bs{x_G} \in S$. Let $B = \ball(x_G, 8\zeta)$. Let $P = \{\bs{x_G}, \bs{x_1}, \ldots, \bs{x_p}\}$ be a maximal $2\zeta$-packing of $\mathcal{B}$. For every $\bs{x_i} \in P$, there exists a $(2, \phiin, \phiout)$-clusterable graph with cluster membership vector $\bs{x_i}$ and $\error_G(\bs{x_i}) \leq 6\zeta dn$: Since $\bs{x_i}$ is balanced, the number of vertices $j$ such that $(\bs{x_i})_j = -1$ and $(\bs{x_G})_j = 1$ is equal to the number of vertices $j'$ such that $(\bs{x_i})_{j'} = 1$ and $(\bs{x_G})_{j'} = -1$. Consider a perfect matching between these two sets of vertices and, for each matched pair, swap their neighbors in $G$ to obtain $H_{\bs{x_i}}$.
    
    By the maximality of $P$, we have that $\mathcal{B}(\bs{x_G}, 6\zeta) \setminus \cup_i (\mathcal{B}(\bs{x_i}, 4\zeta)) = \emptyset$. Otherwise, we could extend $P$ by $\bs{x_{p+1}}$, where $\bs{x_{p+1}}$ is an element of the non-empty difference. This would contradict the maximality of $P$. Therefore, it follows that 
    \begin{equation*}
        \sum_{i=1}^{p} \lvert B(\bs{x_i}, 4\zeta) \rvert \ge \lvert B(\bs{x_G}, 6\zeta) \rvert
        \Rightarrow p \ge \frac{\lvert B(\bs{x_G}, 6\zeta) \rvert}{\lvert B(\bs{x_G}, 4\zeta) \rvert}
        \ge \frac{\binom{n/2}{3\zeta n}^2}{\binom{n/2}{2\zeta n}^2}
        \ge \frac{(\frac{1}{6\zeta})^{6\zeta n}}{(\frac{e}{4\zeta})^{4\zeta n}}
        \ge \left( \frac{2}{18e \zeta} \right)^{2\zeta n}.
    \end{equation*}
    By group privacy and an averaging argument, there exists $i \in [p]$ so that
    \begin{equation*}
        \Pr[\mathcal{A}(H_{\bs{x_i}}) \subseteq \ball(\bs{x_i}, \zeta)]
        \le \exp(\epsilon \zeta n d) \Pr[\mathcal{A}(G_{\bs{x_G}}) \subseteq \ball(\bs{x_i}, \zeta)]
        \le \exp(\epsilon \zeta n d) \frac{\eta}{p}.
    \end{equation*}
    Now, assume that there is an $\zeta$-approximate, $\epsilon$-DP algorithm with failure probability $\eta$. By assumption, $1-\eta \le \Pr[\mathcal{A}(H_{\bs{x_i}}) \subseteq \ball(\bs{x_i}, \zeta)]$.
    Rearranging, we obtain
    \begin{equation*}
        \frac{2 \ln \left( \frac{1}{9e \zeta} \right)}{d}
        \leq \frac{2\zeta n \ln \left( \frac{2}{18e \zeta} \right)}{\zeta d n}
        \leq \frac{\ln(1-\eta) + \ln\left(( \frac{2}{18e \zeta})^{2\zeta n} \right) - \ln(\eta)}{\zeta d n}
        \leq \epsilon. \qedhere
    \end{equation*}
\end{proof}

\subsubsection*{Acknowledgments}
W.H. and P.P. are supported in part by NSFC grant 62272431 and “the Fundamental Research Funds for the Central Universities”.

\bibliography{clustering}
\bibliographystyle{iclr2024_conference}

\appendix

\section{Deferred proofs from \Cref{sec:preliminaries}}
\label{section:deferred-proofs-theo:spectralandkmeans}
In this section, we give a sketch of the proof of Lemma \ref{lem:spectralandkmeans} restarted below.

\spectrakmeans*
\begin{proof}[Proof sketch]
    The following five properties given in Lemma \ref{lem:proofinpeng2015} are five key components proven in \cite{peng2015partitioning}. They start by demonstrating the first property based on the existing conditions. Then, they provide $k$ centers for spectral embeddings and sequentially prove that all embedded points concentrate around their corresponding centers, the magnitudes of center vectors, and the distances. Finally, based on these properties, they establish the fifth one. With this fifth property, we can directly employ a proof by contradiction to derive our conclusion.
\end{proof}

\begin{lemma}
    \label{lem:proofinpeng2015}
    Consider the setting in Lemma \ref{lem:spectralandkmeans}, let $\zeta \triangleq \frac{1}{10\sqrt{k}}$, $\bs{p^{(i)}}
    \triangleq \frac{1}{\sqrt{\vol{(S_i)}}} \Paren{\beta^{(1)}_i,\dots,\beta^{(k)}_i}^\top$ for $i\in[k]$, and $\COST(C_1,\cdots,C_k) \triangleq \min_{c_1,\cdots,c_k\in\R^k} \sum_{i=1}^k \sum_{u \in C_i} \deg(u) \Norm{F(u) - c_i}_2^2$ for the partition $C_1,\cdots,C_k$. The following statements hold:
    \begin{enumerate}
        \item For any $\ell \ne j$, there exists $i\in[k]$ such that
        \begin{equation*}
            \Abs{\beta_\ell^{(i)}-\beta_j^{(i)}}
            \ge \zeta \triangleq \frac{1}{10\sqrt{k}}
        \end{equation*}
        \item All embedded points are concentrated around $\bs{p^{(i)}}$:
        \begin{equation*}
        \sum_{i=1}^k \sum_{u \in S_i} \deg(u) \Norm{F(u) - \bs{p^{(i)}}}_2^2
        \le k\theta^2.
        \end{equation*}
        \item For every $i\in[k]$ that
        \begin{equation*}
            \frac{99}{100\vol(S_i)} \le \Norm{\bs{p^{(i)}}}_2^2 \le \frac{101}{100\vol(S_i)}.
        \end{equation*}
        \item For every $i\ne j, i\in[k]$, it holds that
        \begin{equation*}
            \Norm{\bs{p^{(i)}}-\bs{p^{(j)}}}_2^2
            \ge \frac{\zeta^2}{10\min\Set{\vol(S_i),\vol(S_j)}},
        \end{equation*}
        \item Suppose that, for every permutations $\sigma$ on $\set{1,\ldots, k}$, there exists $i$ such that $\vol\left(A_i \triangle S_{\sigma(i)}\right) \ge 2\eps \vol\left(S_{\sigma(i)}\right)$ for $\eps \ge 10^5\cdot k^2\theta^2$, then $\COST(A_1,\dots,A_k)\ge 10^{-4}\cdot\eps/k$.
    \end{enumerate}
\end{lemma}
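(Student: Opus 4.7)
The plan is to follow the structure implicit in the ordering of the five properties: properties (1), (3), (4) are purely algebraic facts about the coefficient matrix $\bs{B}$ whose entries are $\bs{B}_{ij} = \beta_j^{(i)}$; property (2) is a direct expansion using the assumption $\|\bs{f_i}-\bs{\hat{g}_i}\|_2 \le \theta$; and property (5) is the combinatorial heart of the argument that leverages (2), (3), and (4). My starting point would be to observe that since $\{\bs{\bar{g}_j}\}_{j=1}^k$ is orthonormal and $\bs{\hat{g}_i} = \sum_j \beta_j^{(i)} \bs{\bar{g}_j}$, we have $\bs{\hat{g}_i}^\top \bs{\hat{g}_{i'}} = (\bs{B}\bs{B}^\top)_{ii'}$. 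On the other hand, $\bs{f_i}^\top\bs{f_{i'}} = \delta_{ii'}$ and $\|\bs{f_i}-\bs{\hat{g}_i}\|_2 \le \theta \le 1/(5k)$, so $\bs{B}\bs{B}^\top = \bs{I} + \bs{E}$ with $\|\bs{E}\|_\infty = O(\theta)$. This near-orthogonality of the rows of $\bs{B}$ (and therefore of its columns, by standard perturbation of orthogonality) drives everything.

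For property (1), I would note that the $\ell$-th and $j$-th columns of $\bs{B}$ are $\sqrt{\vol(S_\ell)}\bs{p^{(\ell)}}$ and $\sqrt{\vol(S_j)}\bs{p^{(j)}}$. Near-orthonormality of columns gives $\|\bs{b_\ell}-\bs{b_j}\|_2^2 \ge 2 - O(\theta k) \ge 1$ for $\theta \le 1/(5k)$, so the maximum coordinate difference is at least $1/\sqrt{k} \gg \zeta = 1/(10\sqrt{k})$. For property (3), $\|\bs{p^{(i)}}\|_2^2 = \|\bs{\hat{g}_i}\|_2^2/\vol(S_i)$, and $\|\bs{\hat{g}_i}\|_2 \in [1-\theta,1+\theta]$ by the triangle inequality applied to $\|\bs{f_i}\|_2 = 1$, which yields the asserted $[99/100, 101/100]$ envelope for sufficiently small $\theta$. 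Property (4) will follow by combining (1) with the definition of $\bs{p^{(i)}}$: picking the coordinate $i$ guaranteed by (1) gives one coordinate of $\bs{p^{(\ell)}}-\bs{p^{(j)}}$ whose squared magnitude, after dividing by the $\vol$ normalizations, is at least $\zeta^2/(10\min\{\vol(S_\ell),\vol(S_j)\})$; here I would use (3) to handle the cross term when the two normalizations differ.

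For property (2), I would compute directly. For $u \in S_\ell$, one has $\bs{\bar{g}_j}(u) = \sqrt{\deg(u)/\vol(S_j)}$ if $j = \ell$ and zero otherwise, so $\bs{\hat{g}_i}(u)/\sqrt{\deg(u)} = \beta_\ell^{(i)}/\sqrt{\vol(S_\ell)}$, which is exactly the $i$-th coordinate of $\bs{p^{(\ell)}}$. Therefore
\begin{equation*}
\sum_{\ell=1}^k \sum_{u \in S_\ell} \deg(u)\,\|F(u) - \bs{p^{(\ell)}}\|_2^2
= \sum_{\ell=1}^k \sum_{u \in S_\ell} \sum_{i=1}^k \bigl(\bs{f_i}(u) - \bs{\hat{g}_i}(u)\bigr)^2
= \sum_{i=1}^k \|\bs{f_i} - \bs{\hat{g}_i}\|_2^2 \le k\theta^2.
\end{equation*}

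The main obstacle is property (5), which I view as the key quantitative statement that converts partition misalignment into $k$-means cost. My approach would be contrapositive: assume that the partition $\{A_i\}$ satisfies the misalignment condition and lower-bound $\COST(A_1,\dots,A_k)$. The intuition is that by property (2), each $A_j$ receives an expected center close to $\bs{p^{(\sigma(j))}}$ for the "right" label, but vertices that are misclassified (i.e., points originally in $S_{\sigma(i)}$ but assigned to $A_j$ with $j \ne i$) sit near $\bs{p^{(\sigma(i))}}$, which is far from $\bs{p^{(\sigma(j))}}$ by property (4). I would pick, for each cluster $A_i$, a dominant ground-truth label and account for the degree-weighted mass of vertices with a non-dominant label; applying a triangle-inequality argument with property (4) as the separation lower bound and (2) as the concentration bound gives that each such vertex $u$ contributes $\Omega(\deg(u) \cdot \zeta^2/\min\vol)$ to the cost. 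Summing the misalignment mass, which by hypothesis is at least $2\eps \vol(S_{\sigma(i)})$ for some $i$, produces the required $10^{-4}\eps/k$ lower bound, while the $k^2\theta^2$ slack in $\eps$ absorbs the $k\theta^2$ concentration error from property (2). The careful bookkeeping here, together with handling the choice of dominant label so that a permutation actually exists, is what makes this step the delicate one.
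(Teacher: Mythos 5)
The paper never actually proves this lemma: its ``proof'' of Lemma \ref{lem:spectralandkmeans} is a one-sentence pointer to \cite{peng2015partitioning}, so you are attempting strictly more than the paper does, and your outline does track the structure of that cited argument. Your proof of property (2) is complete and correct: for $u\in S_\ell$ the $j$-th coordinate of $F(u)-\bs{p^{(\ell)}}$ is exactly $(\bs{f_j}(u)-\bs{\hat{g}_j}(u))/\sqrt{\deg(u)}$, and the double sum telescopes to $\sum_j\Norm{\bs{f_j}-\bs{\hat{g}_j}}_2^2\le k\theta^2$. But property (3) contains a genuine error. The identity $\Norm{\bs{p^{(i)}}}_2^2=\Norm{\bs{\hat{g}_i}}_2^2/\vol(S_i)$ is false: with $\bs{B}_{ij}=\beta_j^{(i)}$ one has $\Norm{\bs{\hat{g}_i}}_2^2=\sum_j(\beta_j^{(i)})^2$, the $i$-th \emph{row} of $\bs{B}$, whereas $\vol(S_i)\Norm{\bs{p^{(i)}}}_2^2=\sum_j(\beta_i^{(j)})^2$, the $i$-th \emph{column}. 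You flag the row/column duality in your opening paragraph, but transferring near-orthonormality from rows to columns costs a factor of $k$: from $\bs{B}\bs{B}^\top=\bs{I}+\bs{E}$ with $\Abs{\bs{E}_{ij}}\le 2\theta+\theta^2$ one only gets $\Norm{\bs{E}}_2=O(k\theta)$, hence $(\bs{B}^\top\bs{B})_{ii}\in[1-O(k\theta),1+O(k\theta)]$, which under the stated hypothesis $\theta\le 1/(5k)$ is an $O(1)$ window, nowhere near $[99/100,101/100]$. (The same slack affects your intermediate claim $\Norm{\bs{b_\ell}-\bs{b_j}}_2^2\ge 2-O(\theta k)\ge 1$ in property (1), though there the conclusion survives because $\zeta=1/(10\sqrt{k})$ leaves ample room.) In \cite{peng2015partitioning} the tight norm bound comes from the \emph{two-sided} structure theorem --- each $\bs{\bar{g}_i}$ is also $\theta$-close to the span of the $\bs{f_j}$ --- which is not among the hypotheses you work from and would need to be imported or reproved.

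The second gap is property (5), which is the crux and is left at the level of intuition. The difficulty is not mere bookkeeping: one must show that if $\COST(A_1,\dots,A_k)$ is small then the optimal centers of the $A_i$ can be matched \emph{injectively} to the $\bs{p^{(j)}}$, so that a permutation $\sigma$ exists at all, and the cases where two output clusters are drawn to the same center, or where some $A_i$ has no dominant ground-truth label, are precisely where the thresholds $\eps\ge 10^5k^2\theta^2$ and the constant $10^{-4}/k$ are consumed; your sketch defers exactly this. Relatedly, your property (4) argument has a soft spot you only gesture at: when $\vol(S_\ell)$ and $\vol(S_j)$ differ substantially, the coordinate supplied by property (1) does not directly lower-bound $\Norm{\bs{p^{(\ell)}}-\bs{p^{(j)}}}_2$ because of the mismatched normalizations $1/\sqrt{\vol(S_\ell)}$ and $1/\sqrt{\vol(S_j)}$; the standard fix is an explicit case split (volumes within a constant factor: use the coordinate; otherwise: use $\Abs{\Norm{\bs{p^{(\ell)}}}_2-\Norm{\bs{p^{(j)}}}_2}$ via property (3)), and note that this makes (4) depend on the problematic property (3) above.
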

\end{document}